\documentclass[a4paper,10pt]{article}
\usepackage[utf8]{inputenc}
\usepackage{amsmath,amsthm, amssymb, amsfonts, amsbsy}

\usepackage{enumerate}
\usepackage[english]{babel}
\usepackage[numbers]{natbib}

\usepackage[a4paper, total={6in, 8in}]{geometry}
\usepackage{setspace}
\usepackage{thmtools}

\providecommand{\keywords}[1]{  \vspace{2ex}
\noindent\textbf{\textit{Keywords:}} #1
  \vspace{2ex}
}

\usepackage[mathscr]{euscript}
\usepackage{xspace}
\usepackage{graphicx}\usepackage{subcaption}
\usepackage{color}
\usepackage{enumerate}
\usepackage{enumitem}
\usepackage[english]{babel}
\usepackage{verbatim}
\usepackage{tikz}\usetikzlibrary{arrows.meta} \usepackage{arydshln} \usepackage{booktabs}\usepackage{natbib}
 \bibpunct[, ]{(}{)}{,}{a}{}{,}     
\usepackage{dsfont}

\usepackage[draft]{fixme}
\fxsetup{
  layout=marginnote,
  marginface=\normalfont\tiny,
  envface=,
  inlineface=,
  innerlayout=noinline,
}
\usepackage[textsize=tiny]{todonotes}

\usepackage[ruled, vlined, linesnumbered, resetcount]{algorithm2e}

\SetKw{Break}{break}
\SetKw{True}{true}
\SetKw{False}{false}
\DontPrintSemicolon

\usepackage{authblk}
\newenvironment{cr-problem}
{\vspace{10pt}\noindent{\bfseries\sffamily Convex Recoloring Problem (\CR).}\itshape\par\noindent}
{\vspace{10pt}}

\newcommand{\R}{\mathbb{R}}
\newcommand{\B}{\mathbb{B}}

\newcommand{\Q}{\mathbb{Q}}

\newcommand{\NP}{\mathsf{NP}}
\newcommand{\Pclass}{\mathsf{P}}

 \let\mathscr\relax\usepackage[scr]{rsfso}

\usepackage{hyperref}

\usepackage{perf}
\usepackage{pgfplots}

\usepackage{pgfplotstable}  \pgfplotsset{compat=1.18}   \usepgfplotslibrary{statistics}

\usepgfplotslibrary{colorbrewer}
\pgfplotsset{compat = newest} 
\usetikzlibrary{pgfplots.statistics, pgfplots.colorbrewer} 
\pgfplotsset{
    cycle list/Set1-9,     every axis plot/.append style={mark=*}
}

\usetikzlibrary{fit,backgrounds}\newtheorem{theorem}{Theorem}
\newtheorem{proposition}[theorem]{Proposition}

\newtheorem{lemma}[theorem]{Lemma}

\newtheorem{example}[theorem]{Example}

\providecommand{\Halmos}{\ensuremath{\square}}
\newcommand{\colors}{\mathscr{C}}

\newcommand{\CR}{\textsc{cr}}
\newcommand{\FR}{\mathcal{F}_{\textsc{r}}}
\newcommand{\FS}{\mathcal{F}_{\textsc{s}}}
\newcommand{\FC}{\mathcal{F}_{\textsc{c}}}
\newcommand{\FF}{\mathcal{F}_{\textsc{f}}}

\usepackage{cleveref}

\begin{document}

\title{Convex Recoloring of General Graphs: Formulations, Polyhedra, and Computational Experiments}

\author{Boyue Lin, Phablo F. S. Moura, Roel Leus}
\affil{Research Center for Operations Research and Statistics, KU Leuven\\
\texttt{\{boyue.lin, phablo.moura, roel.leus\}@kuleuven.be}}
\date{\today}

\maketitle

\begin{abstract}
 A vertex coloring of a graph is convex if the vertices of each color induce a connected subgraph. In the convex recoloring problem~(\CR), the goal is to find a convex coloring while minimizing the weight of recolored vertices, i.e., vertices assigned a color different from their original one. This problem was originally motivated by the study of phylogenetic trees in bioinformatics and is $\NP$-hard even on paths. Most existing research focuses on trees, with only limited results available for general graphs. We advance the state of the art by developing exact solution methods for \CR\ on general graphs. In particular, we propose  four mixed-integer linear programming formulations, including a compact flow-based model and a representatives model, and design corresponding solution methods. We compare the polytopes associated with the linear relaxation of the proposed formulations. Computational experiments on benchmark instances and on new synthetic instances show that a branch-and-cut algorithm based on the representatives formulation performs best overall.
\end{abstract}

\keywords{convex recoloring, branch and price, branch and cut, polyhedra, graph connectivity.}

\section{Introduction}
Let $G=(V,E)$ be an undirected graph, where $V$ and $E$ denote its sets of vertices and edges, respectively.
A \emph{(partial) coloring} of $G$ is a function~$C \colon V \to \colors \cup \{\emptyset\}$ that assigns to each vertex either a color from a set of colors~$\colors$, or $\emptyset$, which represents the absence of color.
A vertex~$v$ in~$V$ is said to be \emph{uncolored} if~$C(v)= \emptyset$.
If every vertex in $V$ has a color, then the coloring is called a \emph{total coloring}. 
For each~$c \in \colors$, the \emph{color class} of $c$ (denoted by~$C^{-1}(c)$) is the set of vertices that are assigned color~$c$. Note that this notion of coloring differs from the classical vertex coloring, in which adjacent vertices must have different colors.

A \emph{colored graph} is a pair~$(G,C)$ consisting of a graph~$G$ and a coloring $C$ of its vertices. A coloring~$C$ is said to be \emph{convex} if, for every~$c \in \colors$, the color class of~$c$ induces a connected subgraph of~$G$, that is, $G[C^{-1}(c)]$ is connected for every~$c \in \colors$. 
Given a colored graph~$(G,C)$, any other coloring $C'\colon V \to \colors \cup \{\emptyset\}$ is called a \emph{recoloring} of~$(G,C)$. A vertex $v\in V$ is \emph{recolored} by $C'$ if $C'(v)\neq C(v)$ and $C(v)\neq \emptyset$. The convex recoloring problem is defined as follows. 

\begin{cr-problem} 
\leavevmode
\begin{description}
 \raggedright
    \item[\textbf{Instance:}] A connected graph $G$, a coloring $C\colon V \rightarrow \colors \cup \{\emptyset\}$, and a weight function $w\colon V \rightarrow \Q_{\ge}$.
    \item[\textbf{Find:}] A convex recoloring $C'$ of $(G, C)$.
    \item[\textbf{Goal:}] Minimize $\sum_{v \in R(C')} w(v)$, where $R(C') := \{ v \in V : C(v) \neq \emptyset \text{ and } C(v) \neq C'(v) \}$ is the set of vertices recolored by $C'$.
  \end{description}
\end{cr-problem}

There are several applications of convex recoloring in phylogenetic trees~\citep{goldberg1996minimizing}, wireless communication systems~\citep{Campelo2021Heuristics}, protein-protein interaction networks~\citep{chor2007connected,vazquez2003global,schwikowski2000network}, and communication and transportation networks~\citep{calvert1997modeling,KAMMER2012The}.
In each of these domains, ensuring connectivity of vertices with the same attribute captures important structural or functional requirements. This naturally gives rise to formulations that explicitly enforce connectivity within each color class, which lies at the core of the models developed in this paper.
In phylogenetics, for instance, a convex coloring corresponds to a perfect phylogeny, and the recoloring objective can be viewed as a measure of distance from perfection. In protein-protein interaction networks, convex coloring ensures that proteins associated with the same cellular function or location induce connected subgraphs. 

Convex recoloring has been
studied primarily on trees due to its origin.
\citet{MORAN2008Convex} introduced this problem to investigate phylogenetic trees and proposed a dynamic programming algorithm for trees. Convex recoloring is known to be $\NP$-hard even on colored paths where each color appears at most twice~\citep{kanj2009convex, Moran2011}.
Several approximation algorithms are known for trees and paths. \cite{MORAN20071078} present a 2-approximation algorithm on paths. For any~$\varepsilon >0$, \cite{KAMMER2012The} propose a $(2+\varepsilon)$-approximation algorithm for bounded-treewidth graphs. \cite{lima2014convex} devise a $3/2$-approximation for paths in which each color appears at most twice. 

There are also several hardness results for graphs containing cycles.
\cite{campelo2014hardness} show that unweighted $\CR$ is $\NP$-hard on planar graphs with maximum degree~4. 
Additionally, they prove that, for a positive constant~$c$, there is no $c \ln n$-approximation algorithm even for \mbox{2-colored} \mbox{$n$-vertex} bipartite graphs, unless $\Pclass=\NP$.
\cite{moura2020strong} show that, for any \mbox{$\varepsilon>0$}, unweighted $\CR$ cannot be approximated within a ratio of $n^{1-\varepsilon}$ on $n$-vertex bipartite graphs, unless~$\Pclass=\NP$.

Exact methods for trees have also been developed.
\cite{Chopra2019} design an extended compact formulation for trees, and \cite{chopra2017column} propose a column generation method to solve convex recoloring on trees. 
\cite{Campelo2016} devise an integer linear programming (ILP) formulation with minimal vertex cut inequalities for arbitrary graphs, but they implement this model in a branch-and-cut algorithm only for \CR\ on trees. 
More recently, \cite{Campelo2022} proposed an ILP formulation based on connected subgraphs for general graphs, whereas 
their branch-and-price algorithm is restricted to trees.
In this class of graphs, they show that the associated pricing problem can be solved in polynomial time. \cite{dantas2022heuristic} propose heuristics for \CR\ on general graphs.
Although many applications of convex recoloring naturally arise on graphs with cycles, the problem remains largely unexplored in this setting. We address this gap by proposing exact solution methods for \CR\ on general graphs.

 Our contributions are threefold. First, we introduce four mixed-integer linear programming formulations, including a compact flow formulation and a representatives formulation, and we
 strengthen
 existing models using a structural result that allows us to restrict the solution space without loss of optimality. 
 Second, we analyze the polyhedral structure of these formulations and compare the strength of their linear relaxations. Third, we design dedicated solution methods, including branch-and-price and branch-and-cut algorithms, which enable a systematic comparison of distinct algorithmic frameworks for \CR. We also conduct extensive computational experiments on benchmark instances and new synthetic instances inspired by real-world networks, showing that a branch-and-cut algorithm based on the representatives formulation achieves the best overall performance.

 The remainder of the paper is organized as follows. \Cref{sec:Convex recoloring models} introduces the four formulations. 
\Cref{sec:Polyhedral comparison} presents the polyhedral comparison.
\Cref{sec:Branch-and-price scheme for the subgraph formulation} develops 
a branch-and-price algorithm for the connected subgraph formulation. 
\Cref{sec:Computational experiments} reports the computational experiments.
\Cref{sec:Conclusion} concludes the paper and discusses directions for future research.

In what follows, $(G,C,w)$ denotes an instance of \CR\ consisting of graph~$G$ with $n$ vertices and $m$ edges, a coloring $C$ that uses a set~$\colors$ with~$k$ colors, and $w(v)$ denotes the weight of vertex $v\in V$.
Note that minimizing the sum of the weights of recolored vertices is equivalent to maximizing the sum of the weights of vertices that keep their initial colors.
In this paper, we study $\CR$ as a maximization problem, in line with the work on solution approaches for \CR\ in the literature.

\section{Convex recoloring formulations}
\label{sec:Convex recoloring models}
We propose four formulations of \CR\ that model convex recolorings in which every color class contains at least one vertex that retains its original color. 
Formally, a coloring $C'$ is called \emph{expanding} (with respect to $C$) if, for every color $c\in \colors$ used in~$C'$, there exists a vertex $v$ such that $C'(v)=C(v)=c$. 
The following lemma is a straightforward generalization of the corresponding result for the unweighted case due to~\citet{Moran2011},
 but it was never used before to design solving methods for~\CR.

\begin{lemma}
\label{lemma:optimal maintains one color}
Let $I$ be an instance of \CR. 
There exists an optimal expanding convex recoloring of~$I$. 
\end{lemma}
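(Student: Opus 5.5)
Our plan is to start from an arbitrary optimal convex recoloring and repair it color by color without lowering its value. Fix an optimal convex recoloring $C'$ of $I=(G,C,w)$. The set of optimal convex recolorings is nonempty: the coloring with every vertex uncolored (i.e., $C'(v)=\emptyset$ for all $v$) is convex, and there are finitely many recolorings. Among all optimal convex recolorings, choose $C'$ minimizing the number of colors $c$ that are used in $C'$ but are \emph{bad}, meaning no vertex $v$ has $C'(v)=C(v)=c$. If there is no bad color, $C'$ is expanding and we are done. So suppose some color $c$ is bad and derive a contradiction.

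For a bad color $c$, every vertex of $S:=C'^{-1}(c)$ either was uncolored in $C$ or had an original color different from $c$. We define $C''$ by setting $C''(v)=\emptyset$ for every $v\in S$ and $C''(v)=C'(v)$ elsewhere. The key steps are as follows.

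\textbf{(i) Convexity.} Every color class of $C''$ other than $c$ coincides with its class in $C'$, so it still induces a connected subgraph. Color $c$ is no longer used, so it imposes no constraint. Hence $C''$ is convex.

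\textbf{(ii) Objective.} The set of vertices keeping their original color is unchanged. Vertices outside $S$ are untouched. A vertex $v\in S$ did not keep its color under $C'$, since $C(v)\ne c=C'(v)$ whenever $C(v)\ne\emptyset$. Under $C''$ it is still either uncolored originally or recolored, because $C''(v)=\emptyset\neq C(v)$ when $C(v)\neq\emptyset$. So $R(C'')=R(C')$, and $C''$ is optimal. Its bad colors are exactly those of $C'$ minus $c$, which contradicts the choice of $C'$.

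The main obstacle is a modeling subtlety rather than a combinatorial one. The argument relies on partial colorings being allowed, with $\emptyset$ imposing no connectivity requirement, so that we may simply erase a bad class. If instead one insisted on total colorings, we would have to merge $S$ into an adjacent color class. This is possible because $G$ is connected, so some class $C'^{-1}(d)$ with $d\ne c$ is adjacent to $S$ whenever $S\neq V$, and the union of two adjacent connected sets is connected. That merge again does not decrease the weight of retained vertices, since vertices of $S$ retained nothing. The case $S=V$ (only one color used, and it is bad) would then be handled by recoloring everything with the original color $C(u)$ of any vertex $u$ that has one; this only gains weight. We will present the partial-coloring version and remark that the merging variant covers total colorings.
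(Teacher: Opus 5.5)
Your proof is correct, but your main argument repairs a bad color differently from the paper. The paper \emph{merges} the bad class ${C'}^{-1}(c)$ into an adjacent class ${C'}^{-1}(d)$. It uses the connectivity of $G$ to find the connecting edge and the fact that two adjacent connected sets have connected union, obtaining $R(C'')\subseteq R(C')$. It treats separately the case in which $C'$ uses a single color, and it iterates until the recoloring is expanding.

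You instead \emph{erase} the bad class. This exploits two facts: recolorings may be partial, and uncoloring an originally colored vertex costs exactly as much as recoloring it, so $R(C'')=R(C')$. Combined with your extremal choice of $C'$, this needs neither the connectivity of $G$ nor the single-color case. It also covers the degenerate instance in which $C$ uses no color at all, since the empty recoloring is vacuously expanding; the paper's single-color step tacitly excludes that instance. Your version is also exactly what the formulations require, since all of them allow uncolored vertices through their $\le 1$ assignment constraints.

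What the paper's merging buys is preservation of totality: from a total optimal recoloring it produces a total expanding one. Note, however, that for a partial $C'$ the paper's edge to a vertex of color $d\neq c$ may exist only with $d=\emptyset$, in which case its merge step is precisely your erasure. Your closing remark on total colorings is essentially the paper's proof.
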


All the proofs are provided in Appendix~\ref{Appendix:proofs}. Throughout this section, we use~\Cref{lemma:optimal maintains one color} to restrict the solution space without affecting optimality.
In the first three formulations, for all $v \in V$ and $c \in \colors$, we define $w_{vc} =w(v)$ if $C(v) = c$, and $w_{vc}=0$ otherwise. 

\subsection{Flow formulation}
Existing formulations for \CR\ either apply only to trees~\citep{Chopra2019} or involve exponentially many variables or constraints~\citep{Campelo2016,Campelo2022}.
We propose a compact flow-based formulation
for \CR\ on general graphs.
To the best of our knowledge, this is the first such formulation.

In this model, network flows are produced by a newly created vertex (source) and consumed by the vertices of the original graph.
For each color, the corresponding arborescence (i.e., a tree with all edges oriented from the source toward the leaves) consists of the source node and the vertices assigned that color, which act as sinks, each consuming exactly one unit of flow.
The idea of sending one unit of flow from the source to each vertex to enforce connectivity is inspired by the single commodity flow model for the Minimum Spanning Tree Problem~\citep{magnanti1995optimal} and by the flow formulation for the Balanced Connected $k$-Partition Problem~\citep{miyazawa2021partitioning}.

We first construct a directed graph from the undirected input graph~$G$ as follows. 
Let~$D$ be a directed graph with vertex set~$V\cup\{s\}$ and arc set 
\(\{(u,v), (v,u) \colon \{u,v\} \in E\} \cup \{(s,v) \colon v \in V\}.\)
In other words, $D$ is obtained from~$G$ by replacing every edge~$\{u,v\} \in E$ by the arcs~$(u,v)$ and~$(v,u)$, and adding a new vertex~$s$ with outgoing arcs to all vertices in~$V$.
For every $S \subseteq V(D)$, let us denote $A(S)$ as the set of arcs in $D$ with both endpoints in $S$.
Moreover, denote by $\delta^-(S)$ the set of arcs in $D$ with tail in~$V(D)\setminus S$ and head in $S$, that is, 
$\delta^-(S)=\{(u,v) \in A(D) : u\in  V(D)\setminus S \text{ and } v \in S\}$.
The set of arcs $\delta^+(S)$ is defined analogously. Let us define $\delta(S)=\delta^-(S)\cup\delta^+(S)$.

For each $a \in A(D)$ and $c \in \colors$, the model has a binary variable $y_{ac}$ that equals~$1$ if arc $a$ is selected to be in the arborescence connecting the vertices of color~$c$. 
The continuous variable~$f_{ac}$ represents the flow on arc $a$ corresponding to color~$c$.
For every~$v \in V(D)\setminus\{s\}$ and~$c \in \colors$, we define~$y(\delta^-(v),c) = \sum_{a \in \delta^-(v)} y_{ac} .$
A vertex $v \in V(D)\setminus\{s\}$ receives color $c$ in the recoloring if~$y(\delta^-(v),c)=1$.
Consider the following formulation of \CR: 
\begin{align} 
           \max \: &  \sum_{v \in V(D)\setminus\{s\}}  \sum_{c \in \colors} w_{vc} \, y(\delta^-(v), c) \label{for:flow_obj}\\
    \text{s.t.} \:& \sum_{a \in \delta^+(s)} y_{ac}  \leq 1  &     \forall c \in \colors \label{for:flow_1:new},\\
                \:& \sum_{c \in \colors} y(\delta^-(v), c)  \leq 1  &    \forall v \in V(D) \setminus \{s\} \label{for:flow_2:new},\\
                \: & f_{ac} \leq n \: y_{ac} &   \forall a \in A(D), c \in \colors \label{for:flow_4:new},\\
                \: & \sum_{a \in \delta^{-}(v)} f_{ac} -  \sum_{a \in \delta^+(v)} f_{ac}  =  y(\delta^-(v), c) &      \forall  v \in V(D)\setminus \{s \}, c \in \colors \label{for:flow_5:new},\\ 
               \: & y_{ac} \in \B &   \forall a \in A(D), c \in \colors \label{for:flow_6:new},\\ 
               \: & f_{ac} \in \R_\geq &  \forall a \in A(D), c\in \colors  \label{for:flow_7:new}.
\end{align}
Constraints~\eqref{for:flow_1:new} enforce that for each color, at most one chosen arc can leave the source $s$. Constraints~\eqref{for:flow_2:new} ensure each vertex in $V$ receives at most one incoming arc across all colors. 
Constraints~\eqref{for:flow_4:new} impose a capacity of $n$ on each arc. Constraints~\eqref{for:flow_5:new} require that each selected vertex consumes exactly one unit of flow 
for its assigned color. 
\begin{proposition}
\label{proposition:correctness_flow}
   The formulation~\eqref{for:flow_obj}--\eqref{for:flow_7:new} correctly models \CR.
\end{proposition}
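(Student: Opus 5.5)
We show two things: every feasible solution $(y,f)$ of \eqref{for:flow_1:new}--\eqref{for:flow_7:new} yields a convex recoloring whose objective value (weight of vertices keeping their color) equals \eqref{for:flow_obj}; and an optimal expanding convex recoloring, which exists by \Cref{lemma:optimal maintains one color}, yields a feasible solution with the same value. Together these give equality of optima, and the first direction also gives a way to read an optimal recoloring off an optimal solution.

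\textbf{From solutions to recolorings.} Given feasible $(y,f)$, define $C'(v)=c$ if $y(\delta^-(v),c)=1$ and $C'(v)=\emptyset$ otherwise. By \eqref{for:flow_2:new} and integrality this is well defined. The objective \eqref{for:flow_obj} then counts exactly $\sum_{v: C'(v)=C(v)\neq\emptyset} w(v)$, since $w_{vc}=0$ whenever $C(v)\neq c$.

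The main step is convexity. Fix $c$ and let $A_c=\{a: y_{ac}=1\}$. By \eqref{for:flow_4:new}, flow of color $c$ travels only on arcs of $A_c$. A vertex $v\notin C'^{-1}(c)$ has no incoming arc of $A_c$, so it receives no $c$-flow; by \eqref{for:flow_5:new} its outflow is then $0$ as well. Hence the $c$-flow is supported on arcs between vertices in $\{s\}\cup C'^{-1}(c)$. Every $v\in C'^{-1}(c)$ has net inflow $1>0$, so by flow decomposition (the flow has no sinks other than these vertices and the only source is $s$) there is a directed path from $s$ to $v$ using only arcs of $A_c$ with positive flow. All internal vertices of this path lie in $C'^{-1}(c)$. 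Constraint \eqref{for:flow_1:new} says at most one arc $(s,r)$ is used, so every such path enters $V$ at the same vertex $r$. Therefore every vertex of $C'^{-1}(c)$ is joined to $r$ inside $G[C'^{-1}(c)]$, which makes the class connected.

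Some care is needed in the decomposition: cycles in the flow carry no net flow and can be removed. The point is that any vertex with positive net inflow is reachable from $s$ in the support graph. This follows by considering the set $S$ of vertices reachable from $s$ in the support: if some $v\notin S$ had positive net inflow, summing \eqref{for:flow_5:new} over $V\setminus S$ would give a positive value. But no flow enters $V\setminus S$ from $S\cup\{s\}$, since any such arc would make its head reachable. So the net inflow into $V\setminus S$ is at most $0$, a contradiction. I expect this to be the main obstacle to write rigorously.

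\textbf{From recolorings to solutions.} Take an optimal expanding convex recoloring $C'$. For each color $c$ used in $C'$, pick a vertex $r_c$ with $C'(r_c)=C(r_c)=c$; any vertex of the class would in fact suffice. Take a spanning tree $T_c$ of the connected graph $G[C'^{-1}(c)]$, rooted at $r_c$, and orient its edges away from the root. Set $y=1$ on $(s,r_c)$ and on these oriented tree arcs. Let the flow $f_{ac}$ on an arc $a$ equal the number of vertices in the subtree below its head, so $f_{(s,r_c),c}=|C'^{-1}(c)|\le n$. Every colored vertex then has exactly one incoming selected arc, which gives \eqref{for:flow_2:new}. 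Flow conservation \eqref{for:flow_5:new} holds with consumption $1$ at each class vertex and $0$ elsewhere, and \eqref{for:flow_1:new} and \eqref{for:flow_4:new} hold. Colors not used by $C'$ get all variables $0$. The objective value equals the weight of the unrecolored vertices, which completes the argument.
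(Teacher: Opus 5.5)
Your proof is correct and follows the paper's two-direction argument: spanning arborescences of the color classes with subtree-size flows give feasibility, and conversely all $c$-flow reaches the class through the single selected arc out of $s$. Your cut-summation argument on the positive-flow support rigorously proves the reachability that the paper only asserts when it claims the selected arcs form an arborescence; restricting to an expanding recoloring is unnecessary for the unreduced model, as you note, but it means your argument also covers the reduced formulation $\FF$.
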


We now exploit~\Cref{lemma:optimal maintains one color} in the foregoing formulation. 
For each color~$c \in \colors$, we may restrict the root of the corresponding arborescence to be a vertex that is initially colored~$c$. 
Indeed, in any expanding recoloring that uses color~$c$, at least one vertex retains color~$c$, and this vertex can be chosen as the root of the $c$-arborescence. 
Thus, we remove from the model the variables $f_{(s,v)c}$ and $y_{(s,v)c}$ for all $c\in \colors$ and $v\in V\setminus C^{-1}(c).$
We refer to the resulting reduced model as the flow formulation, denoted by~$\FF$.

The following inequalities~\eqref{for:flow_inequality} are valid for the flow formulation:
\begin{equation}
\label{for:flow_inequality}
y_{ac} \leq y(\delta^-(v), c) \qquad\qquad \forall  v \in V(D)\setminus \{s \}, a \in \delta^+(v), c \in \colors.
\end{equation}
These constraints state that, for each vertex other than the source, an outgoing arc of a given color can be selected only if the vertex has an incoming arc of the same color.

\subsection{Connected subgraph formulation}

The \textit{connected subgraph formulation} enforces connectivity directly through the decision variables by selecting connected subsets of vertices for each color.
Let $\mathcal{S}(G)$ denote the collection of all subsets of $V$ that induce connected subgraphs of $G$, and let $\eta = |\mathcal{S}(G)|$. When the graph $G$ is clear from the context, we simply write~$\mathcal{S}$.  
For a subset $V' \subseteq V$, let $\mathcal{S}_{\supset}(V') = \{ H \in \mathcal{S} : H \supseteq V' \}$ be the set of connected subsets that contain $V'$ and $\mathcal{S}_{\supset}(v) := \mathcal{S}_{\supset}(\{v\})$ for $v\in V$.  

A subgraph $H\in \mathcal{S}$ is called \emph{$c$-monochromatic} if all vertices in $H$ are assigned color $c$. 
The associated profit of making~$H$ $c$-monochromatic is
\(
w_{Hc} := \sum_{v \in H} w_{vc}.
\)
The following formulation of~\citet{Campelo2022} contains  a binary variable $l_{Hc}$ for every $H \in \mathcal{S}$ and $c \in \colors$, which equals~1 if and only if $H$ is $c$-monochromatic in the recoloring.
\begin{align}
\max \: & \sum_{H \in \mathcal{S}(G)} \sum_{c \in \colors} w_{Hc} \, l_{Hc} \label{for:connectSub_obj}\\
\text{s.t.} \: & \sum_{H \in \mathcal{S}_\supset (v)} \sum_{c \in \colors} l_{Hc} \leq 1 & \forall v \in V \label{for:connectSub_1},\\
               \: & \sum_{H \in \mathcal{S}(G)} l_{Hc} \leq 1 & \forall c \in \colors \label{for:connectSub_2},\\
               \: & l_{Hc} \in \{0,1\} & \forall H \in \mathcal{S}(G), \, c \in \colors.
\end{align}
Constraints~\eqref{for:connectSub_1} ensure that each vertex belongs to at most one connected subgraph, and therefore receives at most one color. Constraints~\eqref{for:connectSub_2} guarantee that each color is assigned to at most one connected subgraph. 

It is worth noting that, although the connected subgraph formulation in~\cite{Campelo2022} is described for general graphs, 
the associated pricing problem, implementation, and computational experiments in that work consider only tree instances of \CR. 
We now update the model of~\citeauthor{Campelo2022}  to incorporate the insight of \Cref{lemma:optimal maintains one color}. We restrict the solution space to expanding solutions as follows: if a connected subgraph $H$ is selected for color $c$, then $H$ must contain at least one vertex initially colored $c$. 
Thus, we can simply remove the variables $l_{Hc}$ for all $H \in \mathcal{S}(G)$ and $c \in \colors$  for which  $H\cap C^{-1}(c)=\emptyset$.
This reduced model is called the \emph{connected subgraph formulation} of~\CR, denoted by~$\FS$.

As the connected subgraph formulation may contain an exponential number of variables, we solve it using a branch-and-price approach,
for which the algorithmic details are developed in~\Cref{sec:Branch-and-price scheme for the subgraph formulation}.  In that section,
we show that the associated pricing problem reduces to the \emph{Maximum Weight Connected Subgraph problem} (MWCS),
which is known to be $\NP$-hard (see~\citealp{ideker2002discovering}). 
Given a vertex-weighted connected graph, MWCS is the problem of finding a connected subgraph of maximum total weight. 

\subsection{Vertex cut formulation}
\label{sec:Vertex cut formulation}
An alternative formulation enforces connectivity for each color class using vertex cut inequalities in the spirit of Menger's Theorem \citep{menger1927allgemeinen}.
A vertex set $Z$ is called a $(u,v)$-vertex cut if $u$ and~$v$ belong to different connected components of $G-Z$. 
Every such $(u,v)$-cut contains at least one vertex from each path connecting $u$ and $v$.
Let $\Gamma(u,v)$ denote the collection of inclusion-minimal $(u,v)$-vertex cuts.

\cite{CAMPELO2013233,Campelo2016} introduced the following formulation for \CR\ using vertex cuts.
For any vertex $v \in V$, let $N(v)$ denote the set of all vertices adjacent to $v$ in $G$, that is, $N(v)=\{u \in V :  \{u,v\} \in E\}$. 
For each vertex~$v \in V$ and color $c \in \colors$, we introduce a binary variable $h_{vc}$ such that $h_{vc}=1$ if and only if vertex $v$ receives color $c$ in the recoloring.
\begin{align}
\max \: & \sum_{v \in V} \sum_{c \in \colors} w_{vc} \, h_{vc} \label{for:vertexCut_obj} \\
\text{s.t.} \: & \sum_{c \in \colors} h_{vc} \leq 1 & \forall v \in V \label{for:vertexCut_1},\\
       \: & h_{uc} + h_{vc} - \sum_{z \in Z} h_{zc} \leq 1 & \forall c \in \colors,\,  u\in V, \, v \notin N(u), \, Z \in \Gamma(u,v)\label{for:vertexCut_2},\\
       \: & h_{vc} \in \{0,1\} & \forall v \in V, \, c \in \colors.
\end{align}
Constraints~\eqref{for:vertexCut_1} ensure that every vertex can receive only one color. 
Constraints~\eqref{for:vertexCut_2} enforce connectivity for each color class~$c\in\colors$ by requiring that every minimal vertex cut separating two nonadjacent vertices of color~$c$ contains a vertex of color~$c$.
The polyhedral structure of the convex hull of the binary vectors satisfying~\eqref{for:vertexCut_1} and~\eqref{for:vertexCut_2} is investigated by~\cite{CAMPELO2013233,Campelo2016}. 

We now modify this model using~\Cref{lemma:optimal maintains one color}. 
Specifically, we replace constraints~\eqref{for:vertexCut_2} with constraints~\eqref{for:vertexCut_lemma_retain-color} and~\eqref{for:vertexCut_lemma_connectivity}. 
\begin{align}
        \: & h_{vc} \le\sum_{u \in C^{-1}(c)} h_{uc} & \forall c \in \colors, v\in V\setminus C^{-1}(c) \label{for:vertexCut_lemma_retain-color},\\
       \: & h_{uc} + h_{vc} - \sum_{z \in Z} h_{zc} \leq 1 & \forall c \in \colors, u \in C^{-1}(c), v \notin N(u),   Z \in \Gamma(u,v). \label{for:vertexCut_lemma_connectivity}
\end{align}
Constraints~\eqref{for:vertexCut_lemma_retain-color} require that a color be used only if it is assigned to at least one vertex of that color in the input coloring.  Constraints~\eqref{for:vertexCut_lemma_connectivity} guarantee that any vertex assigned color~$c$ is connected to some vertex in $C^{-1}(c)$ that also receives color $c$.
Note that the number of inequalities~\eqref{for:vertexCut_lemma_connectivity} is approximately~$1/k$ times that of~\eqref{for:vertexCut_2}. 
Together with constraints~\eqref{for:vertexCut_1} and the binary restrictions on~$h$,
constraints~\eqref{for:vertexCut_lemma_retain-color} and
\eqref{for:vertexCut_lemma_connectivity} define the feasible region of the model.
The resulting \emph{vertex cut formulation} of~\CR, denoted by~$\FC$, is the problem of
maximizing~\eqref{for:vertexCut_obj} over this feasible region.

The vertex cut formulation is solved using a branch-and-cut approach in which the connectivity inequalities are separated dynamically via max-flow computations.  Similar to the separation of inequalities~\eqref{for:vertexCut_2} in \cite{CAMPELO2013233,Campelo2016}, inequalities~\eqref{for:vertexCut_lemma_connectivity} can be separated by solving a polynomial number of maximum-flow instances. 
In our implementation,
\Cref{lemma:optimal maintains one color} is exploited by restricting connectivity to vertices that retain their original color, thereby reducing the number  of pairs considered in the separation routine.
Implementation details of the separation procedures and the overall branch-and-cut algorithm are described in Section~\ref{sec:Computational setup}.

\subsection{Representatives formulation}
\label{sec:Representatives formulation}

All previous formulations use variables indexed by colors. We instead introduce a model based on vertex representatives that avoids color-indexed variables.
The formulation enforces connectivity through vertex cut inequalities, similarly to the vertex cut formulation. However, unlike that formulation, which assigns vertices to colors and requires vertices of the same color to be connected, the proposed model associates  
a representative vertex  with each color. Connectivity is then enforced between each vertex and its representative.

The idea of using vertex representatives in graph partitioning models was first introduced for the classic graph coloring problem by~\citet{campelo2004cliques}, and has since been applied to several combinatorial problems, including political districting~\citep{validi2022political}, the set-union bin packing problem~\citep{wahlen2025branch}, and parallel machine scheduling with conflicts~\citep{moura2025compact}.

For each color $c$, any vertex in $C^{-1}(c)$ may be chosen as the representative of that color, which leads to multiple symmetric solutions.
To eliminate this symmetry, we fix an order $\prec$ on the vertices and impose that a vertex represents its original color only if no preceding vertex with the same original color is assigned that color in the recoloring.

We are now ready to present the \emph{representatives formulation}~$\FR$\@.
For each $u\in V$, the set of potential representatives of $u$ is defined as
\(
\rho(u) := \{v\in V : \text{either } C(v)\neq C(u) \text{, or } C(v)=C(u) \text{ and } v\preceq u\}.
\)
\noindent For each \(u \in V\) and $v \in \rho(u)$, $\FR$ includes a binary variable \(x_{uv}\) that equals one if vertex~\(u\) receives color \(C(v)\) in the recoloring. In particular, for each \(v\in V\), \(x_{vv} = 1\)  if \(v\) is the representative of color~\(C(v)\), implying that $v$ keeps its original color. We define the weight function~$\hat{w}\colon V\times V \to \Q_\ge$ as: 
\begin{displaymath}
   \hat{w}_{uv} = \left\{
     \begin{array}{lr}
       w(u) & \text{ if } C(u)=C(v),\\
       0 & \text{ otherwise.}
     \end{array}
   \right.
\end{displaymath} 
The representatives formulation is as follows.
\begin{align}
\max \: & \sum_{u \in V} \sum_{v\in \rho(u)} \hat{w}_{uv}\,x_{uv} \label{for:repre_obj}\\
\text{s.t.}\: 
& \sum_{v \in \rho(u)} x_{uv} \leq 1 
&& \forall u \in V, \label{for:repre_1}\\
& x_{uv}\leq x_{vv} 
&& \forall u \in V, v \in \rho(u), \label{for:repre_2}\\
& \sum_{v \in C^{-1}(c)} x_{vv} \leq 1 
&& \forall c \in \colors, \label{for:repre_3}\\
& x_{uv}\leq \sum_{z \in Z: v \in \rho(z)} x_{zv} 
&& \forall  u \in V, v \in \rho(u)\setminus N(u), Z \in \Gamma(u,v), \label{for:repre_4}\\ 
& x_{uv} \in \{0,1\} 
&& \forall  u \in V, v \in \rho(u). \label{for:repre_integer}
\end{align}
\noindent Constraints~\eqref{for:repre_1} ensure that each vertex is represented by at most one other vertex, or is a representative. 
Constraints~\eqref{for:repre_2} guarantee that a vertex represents another only if it is itself a representative. 
Constraints~\eqref{for:repre_3} enforce that each color is represented by only one vertex. Constraints~\eqref{for:repre_4} ensure that if a vertex $v$ represents a vertex $u$, then it also represents at least one vertex in each minimal vertex cut separating $u$ and $v$.

Note that this formulation does not necessarily capture all optimal solutions to the convex recoloring problem. In particular, it excludes solutions in which a color class $c \in \colors$ is assigned exclusively to vertices that were not initially colored~$c$\@.
However, by \Cref{lemma:optimal maintains one color}, it includes at least one optimal solution.
\begin{proposition}
\label{proposition:correctness_repre}
    Formulation $\FR$ correctly models \CR.
\end{proposition}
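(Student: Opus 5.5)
We prove the statement in two directions. First, every feasible solution of~$\FR$ induces a convex recoloring whose number of kept weight equals the objective value~\eqref{for:repre_obj}. Second, every expanding convex recoloring induces a feasible solution of~$\FR$ with the same value. Together with \Cref{lemma:optimal maintains one color}, the two directions show that the optimum of~$\FR$ equals the optimum of \CR\ (as a maximization problem). They also show that an optimal solution of~$\FR$ yields an optimal recoloring.

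\emph{From $\FR$ to a recoloring.} Let $x$ satisfy \eqref{for:repre_1}--\eqref{for:repre_integer}. Define $C'(u)=C(v)$ if $x_{uv}=1$ for some $v\in\rho(u)$, and $C'(u)=\emptyset$ otherwise. This is well defined by~\eqref{for:repre_1}.

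Fix a color~$c$. If $C'(u)=c$ via $x_{uv}=1$, then~\eqref{for:repre_2} gives $x_{vv}=1$ with $v\in C^{-1}(c)$. By~\eqref{for:repre_3} such a representative is unique; call it $v_c$. Hence $C'^{-1}(c)=\{u: x_{uv_c}=1\}$, and it contains $v_c$.

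To prove convexity, suppose some $u\in C'^{-1}(c)$ is not joined to $v_c$ inside $G[C'^{-1}(c)]$; then $u\notin N(v_c)$, i.e.\ $v_c\in\rho(u)\setminus N(u)$. Let $K$ be the component of $G[C'^{-1}(c)]$ containing~$u$. Its neighborhood $N(K)$ separates $u$ from $v_c$ and avoids $C'^{-1}(c)$. Therefore $N(K)$ contains a minimal cut $Z\in\Gamma(u,v_c)$ with $x_{zv_c}=0$ for all $z\in Z$, and~\eqref{for:repre_4} is violated.

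For the value, note that $\hat w_{uv}x_{uv}=w(u)$ exactly when $x_{uv}=1$ and $C(v)=C(u)$, i.e.\ when $C'(u)=C(u)$. So the objective equals the weight of non-recolored vertices.

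\emph{From an expanding recoloring to $\FR$.} By \Cref{lemma:optimal maintains one color} we may take an optimal expanding convex recoloring~$C'$. For each color $c$ used by $C'$, let $v_c$ be the $\prec$-minimal vertex of $C^{-1}(c)$ with $C'(v_c)=c$; it exists because $C'$ is expanding. Set $x_{uv_c}=1$ for all $u\in C'^{-1}(c)$ and all other variables to~$0$.

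We check that these variables exist, i.e.\ $v_c\in\rho(u)$:
\begin{itemize}
\item if $C(u)\neq c$, this holds trivially;
\item if $C(u)=c$, then $u$ keeps its color, so $v_c\preceq u$ by minimality.
\end{itemize}
Constraints \eqref{for:repre_1}--\eqref{for:repre_3} follow directly from the construction. For~\eqref{for:repre_4}, any $Z\in\Gamma(u,v_c)$ meets every $u$--$v_c$ path. Since $G[C'^{-1}(c)]$ is connected, some such path lies in $C'^{-1}(c)$, so $Z$ contains a vertex $z$ with $x_{zv_c}=1$ (and $v_c\in\rho(z)$ as just shown). The objective value equals the kept weight by the same computation as in the first direction.

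The main obstacle is the convexity argument in the first direction. There we must extract a \emph{minimal} cut inside $N(K)$ while ensuring that the sum in~\eqref{for:repre_4}, which is restricted to $z$ with $v_c\in\rho(z)$, vanishes. This holds because every $z\in N(K)$ has $x_{zv_c}=0$ or the variable is absent.
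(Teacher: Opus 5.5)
Your proof is correct and follows essentially the same route as the paper's. You use the same two directions, the same $\preceq$-minimal representative $v_c$ for each color of an expanding recoloring, and the same contradiction with~\eqref{for:repre_4} via a minimal cut avoiding the color class. You also make explicit two details the paper leaves as ``clear'': extracting the minimal cut from $N(K)$, and checking that $v_c\in\rho(u)$ so the variables $x_{uv_c}$ exist.
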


The representatives formulation is also solved within a branch-and-cut framework, in which the connectivity constraints~\eqref{for:repre_4} are separated dynamically using max-flow computations, analogous to the treatment in the vertex cut formulation. Both constraints~\eqref{for:repre_2} and the exponentially many connectivity constraints \eqref{for:repre_4} are separated dynamically; for the latter, we identify the minimal vertex cuts with the same maximum-flow separation used for~\eqref{for:vertexCut_lemma_connectivity} in~$\FC$, now applied in the representative-variable space of~$\FR$. Implementation details are described in~\Cref{sec:Computational setup}.

\section{Polyhedral comparison} 
\label{sec:Polyhedral comparison}
This section compares the formulations introduced in Section~\ref{sec:Convex recoloring models} by analyzing the strength of their LP relaxations. The relationships among these relaxations are summarized in~\Cref{fig:lp-relaxation-relationships}. In the figure, an arc from formulation~$\mathcal A$ to formulation~$\mathcal B$ indicates that, under an objective-preserving projection onto the variable space of~$\mathcal B$, the polytope associated with the LP relaxation of~$\mathcal A$ is contained in
that of $\mathcal B$. 
We show that the connected subgraph formulation yields the strongest LP relaxation. Moreover, the representatives formulation dominates the vertex cut formulation.  The LP relaxations of the vertex cut and flow formulations are incomparable, in the sense that neither dominates the other.
  The only pairwise comparison left open is between the representatives formulation and the flow formulation.
  We return to this   question in~\Cref{sec:Computational results}, where we examine the corresponding LP bounds empirically.

\begin{figure}[t]
\centering
\begin{tikzpicture}[
    formulation/.style={
        draw,
        rounded corners=2pt,
        align=center,
        minimum width=3.15cm,
        minimum height=1.05cm,
        inner sep=4pt,
        font=\small
    },
dominance/.style={
    -{Latex[length=7pt, width=7pt]},     thick,
    shorten >=2pt,
    shorten <=2pt
}, incomparable/.style={thick, dashed, shorten >=2pt, shorten <=2pt},
    label/.style={font=\scriptsize, fill=white, inner sep=2pt},
    figurebox/.style={
        rounded corners=6pt,
        inner sep=10pt
    }
]

    \node[formulation] (S) at (-6,0.25) {Connected subgraph $\FS$};
    \node[formulation] (R) at (0,1.95) {Representatives $\FR$};
    \node[formulation] (F) at (0,-1.25) {Flow  $\FF$};
    \node[formulation] (C) at (6,0.25) {Vertex cut $\FC$};

    \draw[dominance]
        (S.east) -- node[label, above left] {Prop.~\ref{prop:cs_in_repre}}
        (R.west);

    \draw[dominance]
        (R.east) -- node[label, above right] {Prop.~\ref{prop:repre_in_vertexcut_projection}}
        (C.west);

    \draw[dominance]
        (S.east) -- node[label, below left] {Prop.~\ref{prop:subgraph_in_flow_projection}}
        (F.west);

  \draw[incomparable, dashed]
    (F.east) --
    node[label, below] {Incomparable}
    (C.west);

        \begin{scope}[on background layer]
        \node[figurebox, fit=(S) (R) (F) (C)] {};
    \end{scope}
\end{tikzpicture}
\caption{Relationships among the LP relaxations of the four formulations. An arc $(\mathcal{A},\mathcal{B})$ indicates that $\mathcal{A}$ is stronger than $\mathcal{B}$. } 
\label{fig:lp-relaxation-relationships}
\end{figure}

Let $\mathcal{R}_{\textsc{r}}=\{ x \in \R^{n^2}_\ge :  x \text{ satisfies \eqref{for:repre_1}--\eqref{for:repre_4}} \}$ denote the polytope associated with the linear relaxation of $\FR$, and let $\mathcal{R}_{\textsc{c}} =\{ h \in \mathbb{R}_{\geq}^{n k} : h \text{ satisfies } \eqref{for:vertexCut_1},\eqref{for:vertexCut_lemma_retain-color},\eqref{for:vertexCut_lemma_connectivity}\}$ denote the corresponding polytope for $\FC$.
To relate the two formulations, we define a projection from the representatives space to the vertex cut space.
Let $ \xi\colon \mathcal{R}_{\textsc{r}}\to \R_{\geq}^{n k}$ be a linear function such that 
\begin{equation}
\label{map:repre_to_vc}
    \xi(x)_{uc} = \sum_{v \in C^{-1}(c)\cap \rho(u)} x_{uv}, \quad \forall u \in V\text{ and } c \in \colors.
\end{equation}

\begin{proposition}
\label{prop:repre_in_vertexcut_projection}
The projection $\xi$ maps $\mathcal{R}_{\textsc{r}}$ into $\mathcal{R}_{\textsc{c}}$, that is, $\{\xi(x) \in \R_\ge^{n k} : x \in \mathcal{R}_{\textsc{r}}  \}\subseteq \mathcal{R}_{\textsc{c}}$, and preserves the objective value.
\end{proposition}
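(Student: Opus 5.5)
Fix $x\in\mathcal{R}_{\textsc{r}}$ and set $h=\xi(x)$. The plan is to check, one at a time, nonnegativity of $h$, preservation of the objective, and each constraint family \eqref{for:vertexCut_1}, \eqref{for:vertexCut_lemma_retain-color}, \eqref{for:vertexCut_lemma_connectivity}. The point underlying every check is that, for each $u$, the sets $C^{-1}(c)\cap\rho(u)$, $c\in\colors$, partition $\rho(u)$, since every colored vertex lies in exactly one color class. Nonnegativity is immediate.

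For the objective, expand $\sum_{u}\sum_c w_{uc}h_{uc}=\sum_u w(u)\,h_{u,C(u)}=\sum_u\sum_{v\in\rho(u)\cap C^{-1}(C(u))} w(u)x_{uv}$. This equals \eqref{for:repre_obj}, because $\hat w_{uv}=w(u)$ exactly when $C(v)=C(u)$ and is $0$ otherwise. Constraint \eqref{for:vertexCut_1} follows from \eqref{for:repre_1}, since $\sum_c h_{uc}=\sum_{v\in\rho(u)}x_{uv}\le 1$ by the partition property.

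For \eqref{for:vertexCut_lemma_retain-color}, fix $c$ and $v\notin C^{-1}(c)$. Apply \eqref{for:repre_2} termwise:
\[h_{vc}=\sum_{r\in C^{-1}(c)\cap\rho(v)}x_{vr}\le\sum_{r\in C^{-1}(c)}x_{rr}.\]
Each $r\in C^{-1}(c)$ satisfies $r\in\rho(r)$, so $x_{rr}$ is one of the summands of $h_{rc}$, and hence the right-hand side is at most $\sum_{r\in C^{-1}(c)}h_{rc}$.

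The connectivity inequalities \eqref{for:vertexCut_lemma_connectivity} are the main obstacle. Fix $c$, $u\in C^{-1}(c)$, $v\notin N(u)$ and $Z\in\Gamma(u,v)$. We need $h_{uc}+h_{vc}\le 1+\sum_{z\in Z}h_{zc}$. I would decompose $h_{vc}=\sum_{r\in C^{-1}(c)\cap\rho(v)}x_{vr}$ and split the representatives $r$ according to whether $r$ and $u$ lie in the same component of $G-Z$; note that $r\in Z$ is also possible.
\begin{itemize}
\item If $r\in Z$, bound $x_{vr}\le x_{rr}$ using \eqref{for:repre_2}; the term $x_{rr}$ already appears in $h_{rc}$ with $r\in Z$.
\item If $r\notin Z$ and $r$ is separated from $v$ by $Z$, then $Z$ contains a minimal $(v,r)$-cut $Z'\subseteq Z$, and $v\notin N(r)$. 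Inequality \eqref{for:repre_4} gives $x_{vr}\le\sum_{z\in Z'}x_{zr}$, and each $x_{zr}$ with $r\in\rho(z)$ is a summand of $h_{zc}$.
\item If $r\notin Z$ and $r$ lies in $v$'s component, then $r$ is separated from $u$. The same argument applied to $x_{ur}$ would give a cut bound, but here I keep $x_{vr}$ and instead plan to use \eqref{for:repre_3}: $\sum_{r\in C^{-1}(c)}x_{rr}\le1$.
\end{itemize}

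The delicate part is to combine these three cases without double counting $\sum_{z\in Z}x_{zr}$ across different $r$. The double counting is harmless because distinct $r$ index disjoint variables $x_{zr}$, so summing over $r$ still stays within $\sum_{z\in Z}h_{zc}$. What remains is to control terms of $h_{uc}$ and $h_{vc}$ whose representatives lie on the same side as the other endpoint. For these I would pair each term by its representative $r$: for a fixed $r$ outside $Z$, at most one of $x_{ur}$ and $x_{vr}$ avoids being bounded by a cut sum, and that term is then at most $x_{rr}$. Summing over $r$ and using \eqref{for:repre_3} gives a total of at most $1$.

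If this pairing fails for representatives $r$ that lie in a third component of $G-Z$, then both $x_{ur}$ and $x_{vr}$ are cut-bounded. In that case I would need a more refined argument using minimality of $Z$: the minimal $(u,r)$- and $(v,r)$-cuts contained in $Z$ share vertices, so their sums overlap. I expect this to be the hardest case, and I would first try to handle it by choosing the cut for only one of the two terms.
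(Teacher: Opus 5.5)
Your argument is correct and is essentially the paper's proof: for each $r\in C^{-1}(c)$ the paper bounds the $r$-indexed terms of $h_{uc}+h_{vc}$ by $\sum_{z\in Z:\,r\in\rho(z)}x_{zr}+x_{rr}$. It charges one of $x_{ur},x_{vr}$ to a minimal cut $Z'\subseteq Z$ via \eqref{for:repre_4} and the other to $x_{rr}$ via \eqref{for:repre_2}, then sums over $r$ using \eqref{for:repre_3}, which is exactly your pairing scheme. The third-component case you flag as hardest is not an obstacle: your second bullet already cut-bounds $x_{vr}$ there and you just use $x_{ur}\le x_{rr}$, as the paper does, so no minimality argument is needed. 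For $r\in Z$ you should say explicitly that $x_{ur}\le x_{rr}$ is charged to the \eqref{for:repre_3} budget, while $x_{vr}\le x_{rr}$ goes to the $z=r$ term of the cut sum.
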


The same argument also yields the following implication at the level of the
connectivity constraints. Under the projection~$\xi$,
constraints~\eqref{for:repre_2}--\eqref{for:repre_4} imply the vertex cut inequalities~\eqref{for:vertexCut_lemma_connectivity}; namely, 
\(
\left\{
\xi(x) :
x \in [0,1]^{n^2},\
x \text{ satisfies  \eqref{for:repre_2}--\eqref{for:repre_4}}
\right\}
\subseteq
\left\{
h \in [0,1]^{nk} :
h \text{ satisfies } \eqref{for:vertexCut_lemma_connectivity}
\right\}.
\)
This observation clarifies the relationship between the two classes of connectivity inequalities \eqref{for:repre_4} and~\eqref{for:vertexCut_lemma_connectivity}. Both classes are closely related to connectivity inequalities previously studied in the literature~\citep{wang2017imposing, validi2022imposing}. 
\color{black}

We now establish a similar relationship between the connected subgraph formulation and the representatives formulation.
Let~$\mathcal{R}_{\textsc{s}}$ be the polytope associated with the linear relaxation of $\FS$, that is,  $$\mathcal{R}_{\textsc{s}} =\{ l \in \mathbb{R}_{\geq}^{\eta k} : l \text{ satisfies } \eqref{for:connectSub_1},  \eqref{for:connectSub_2}, \text{ and } \ l_{Hc}= 0 \text{ for all } H \in \mathcal{S}(G) \text{ and } c \in \colors  \text{ with } H\cap C^{-1}(c)=\emptyset\}.$$
For each $H\in\mathcal{S}$ and $c\in\colors$, define $r(H,c) \in H\cap C^{-1}(c)$ such that $r(H,c) \preceq v$ for all $v \in H\cap C^{-1}(c)$ if $H\cap C^{-1}(c)\neq \emptyset$, and $r(H,c)=\emptyset$ otherwise.
In other words, $r(H,c)$ is the smallest vertex, according to the ordering~$\preceq$, among the vertices in $H$ whose initial color is $c$. 

We define a linear mapping $\phi\colon \mathcal{R}_{\textsc{s}}\to \mathbb{R}^{n^2}_{\geq}$ as follows:
\begin{equation}
\label{map:cs_to_repre}
    \phi(l)_{uv}
    =
    \sum_{\substack{H\in\mathcal{S}_{\supset}(u): r(H,C(v))=v}}
    l_{H,C(v)},
    \qquad
    \forall u\in V,\ v\in\rho(u).
\end{equation}

\begin{proposition}
\label{prop:cs_in_repre}
The projection $\phi$ maps $\mathcal{R}_{\textsc{s}}$ into $\mathcal{R}_{\textsc{r}}$, that is, $\{\phi(l)\in\mathbb{R}^{n^2}_{\geq}: l\in\mathcal{R}_{\textsc{s}}\} \subseteq\mathcal{R}_{\textsc{r}}$, and preserves the objective value.
\end{proposition}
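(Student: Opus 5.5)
Given $l\in\mathcal{R}_{\textsc{s}}$, set $x=\phi(l)$. The plan is to check nonnegativity, the objective identity, and constraints \eqref{for:repre_1}--\eqref{for:repre_4} one at a time. Each check rests on one structural fact: every pair $(H,c)$ with $l_{Hc}>0$ has $H\cap C^{-1}(c)\neq\emptyset$, so $r(H,c)$ is a well-defined vertex of $H$. Hence every such $(H,c)$ is counted in exactly one entry $x_{uv}$ for each $u\in H$, namely the one with $v=r(H,c)$.

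First I check that this $v$ is admissible, i.e. $v=r(H,c)\in\rho(u)$ for every $u\in H$. If $C(u)\neq c$, this holds by definition of $\rho$. If $C(u)=c$, then $u\in H\cap C^{-1}(c)$, so $v\preceq u$ by minimality.

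Nonnegativity is immediate. For the objective, $\hat w_{uv}=w(u)$ exactly when $C(u)=C(v)=c$, which matches $w_{uc}$. So
\[
\sum_u\sum_{v\in\rho(u)}\hat w_{uv}x_{uv}=\sum_{H,c}l_{Hc}\sum_{u\in H}w_{uc}=\sum_{H,c}w_{Hc}l_{Hc},
\]
by exchanging sums and using the fact above.

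The constraints then go as follows.
\begin{itemize}
\item For \eqref{for:repre_1}: $\sum_{v\in\rho(u)}x_{uv}=\sum_{H\ni u}\sum_c l_{Hc}\le 1$ by \eqref{for:connectSub_1}, since each $(H,c)$ with $H\ni u$ appears exactly once.
\item For \eqref{for:repre_2}: if $(H,C(v))$ contributes to $x_{uv}$, then $r(H,C(v))=v\in H$, so it also contributes to $x_{vv}$; hence $x_{uv}\le x_{vv}$.
\item For \eqref{for:repre_3}: $\sum_{v\in C^{-1}(c)}x_{vv}=\sum_{H}l_{Hc}$ over $H$ with $r(H,c)\ne\emptyset$. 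Each such $H$ is counted once, for $v=r(H,c)$, so the sum is at most $1$ by \eqref{for:connectSub_2}.
\end{itemize}

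The main step is the cut constraint \eqref{for:repre_4}. Fix $u$, $v\in\rho(u)\setminus N(u)$, $Z\in\Gamma(u,v)$, and write $c=C(v)$. Each $H$ contributing to $x_{uv}$ is connected and contains both $u$ and $v$. Every $u$--$v$ path in $G[H]$ meets $Z$ (note $u,v\notin Z$), so there is some $z\in H\cap Z$. Since $z\in H$ and $r(H,c)=v$, the admissibility fact gives $v\in\rho(z)$, and $(H,c)$ contributes to $x_{zv}$. Choosing one such $z_H$ for each $H$ yields
\[
x_{uv}=\sum_H l_{Hc}\le\sum_{z\in Z:\,v\in\rho(z)}\ \sum_{H:\,z_H=z}l_{Hc}\le\sum_{z\in Z:\,v\in\rho(z)}x_{zv},
\]
using nonnegativity of $l$. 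The care needed here is exactly the $\rho(z)$ membership, which the minimality of $r(H,c)$ supplies. Hence $x\in\mathcal{R}_{\textsc{r}}$, which proves the proposition.
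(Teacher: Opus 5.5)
Your proof is correct and takes essentially the same route as the paper's. You check each constraint separately, credit each $(H,c)$ with $l_{Hc}>0$ to $x_{u,r(H,c)}$ for every $u\in H$, and use the minimality of $r(H,c)$ to get $v\in\rho(z)$ for all $z\in H\cap Z$ in the cut constraint. The only difference is presentational: you state the admissibility fact $r(H,c)\in\rho(u)$ once up front, while the paper invokes it inline in the cut and objective arguments.
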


We consider now the relaxations of the connected subgraph formulation to the flow formulation. 
Let $\mathcal R_{\textsc{f}}$ be the polytope associated with the linear relaxation of \(\FF\) which is defined as
\[
\mathcal R_{\textsc{f}}
= \left\{ (y,f) \in \R^{|A(D)|k}_\ge \times \R^{|A(D)|k}_\ge \;\middle|\;
\begin{aligned}
& (y,f) \text{ satisfies } \eqref{for:flow_1:new}-\eqref{for:flow_5:new}, \\
& f_{(s,v)c} = y_{(s,v)c} = 0 \quad \forall\, c \in \colors,\; v \in V \setminus C^{-1}(c)
\end{aligned}
\right\}.
\]

We next show a map $\lambda\colon \mathcal{R}_{\textsc{s}}\to \R^{|A(D)|k}_\ge\times\R^{|A(D)|k}_\ge$ which is defined based on the following subgraphs of $D$. 
For each $H \in \mathcal{S}$ and $c \in \colors$ with $H \cap C^{-1}(c) \neq \emptyset$, let $\vec H_c$
    be an (out-)arborescence in $D$ rooted at~$s$ obtained by taking an arbitrary (but fixed) spanning tree of $G[H]$, adding an edge $\{s,r\}$ where $r$ is any vertex in $H\cap C^{-1}(c)$, and orienting all edges away from $s$.
    Let us define $\mathcal{H}=\{\vec H_c : H \in \mathcal{S},c\in\colors,\ H\cap C^{-1}(c)\neq\emptyset\}$, and $\mathcal{H}_c(a)=\{\vec H_c \in \mathcal{H} : a \in A(\vec H_c)\}$ for all $a \in A(D)$ and $c \in \colors$.

    Given $\bar l \in \mathcal{R}_{\textsc{s}}$, 
    we define $(\bar y, \bar f) = \lambda(\bar l)$ as follows.
    For every $a \in A(D)$ and $c \in \colors$,
    let $\bar y_{ac} = \sum_{\vec H_c \in \mathcal{H}_c(a)} \bar l_{Hc}$ and $\bar f_{ac} = \sum_{\vec H_c \in \mathcal{H}_c(a)} n(\vec H_c, a) \bar l_{Hc}$, where $n(\vec H_c, a)$ is the number of vertices in the component of $\vec H_c - a$ that contains the head of $a$, that is, the vertex where arc $a$ enters.

\begin{proposition}
\label{prop:subgraph_in_flow_projection}
 The projection $\lambda\colon \mathcal{R}_{\textsc{s}}\to \R^{|A(D)|k}_\ge\times\R^{|A(D)|k}_\ge$ maps $\mathcal{R}_{\textsc{s}}$ into $\mathcal{R}_{\textsc{f}}$, that is, $  \{\lambda(l): l\in\mathcal{R}_{\textsc{s}}\} \subseteq\mathcal{R}_{\textsc{f}}$, and preserves the objective value.
\end{proposition}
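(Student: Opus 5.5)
Since $\lambda$ is linear in $\bar l$, the plan is to check each defining constraint of $\mathcal R_{\textsc f}$ for a single arborescence $\vec H_c$ and then sum with the nonnegative weights $\bar l_{Hc}$. For a fixed pair $(H,c)$ with $H\cap C^{-1}(c)\neq\emptyset$, let $(y^{H,c},f^{H,c})$ be the vector with $y^{H,c}_{ac'}=1$ and $f^{H,c}_{ac'}=n(\vec H_c,a)$ when $c'=c$ and $a\in A(\vec H_c)$, and zero otherwise. Then
\[
\lambda(\bar l)=\sum_{H,c}\bar l_{Hc}\,(y^{H,c},f^{H,c}).
\]
Nonnegativity is immediate. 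The restriction $f_{(s,v)c}=y_{(s,v)c}=0$ for $v\notin C^{-1}(c)$ holds because the only arc of $\vec H_c$ leaving $s$ is $(s,r)$ with $r\in C^{-1}(c)$.

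We then verify the constraints one by one.

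\textbf{Constraint \eqref{for:flow_1:new}.} For fixed $c$, each $\vec H_c$ contributes exactly one arc of $\delta^+(s)$, so $\sum_{a\in\delta^+(s)}\bar y_{ac}=\sum_H \bar l_{Hc}\le 1$ by \eqref{for:connectSub_2}.

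\textbf{In-degree identity.} For $v\in V$, the arborescence $\vec H_c$ has exactly one arc entering $v$ if $v\in H$ and none otherwise. Hence
\[
\bar y(\delta^-(v),c)=\sum_{H\in\mathcal S_\supset(v)}\bar l_{Hc}.
\]
We use this identity repeatedly below.

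\textbf{Constraint \eqref{for:flow_2:new}.} Summing the identity over $c$ gives $\sum_c \bar y(\delta^-(v),c)=\sum_{H\in\mathcal S_\supset(v)}\sum_c\bar l_{Hc}\le 1$ by \eqref{for:connectSub_1}.

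\textbf{Capacity \eqref{for:flow_4:new}.} Since $n(\vec H_c,a)\le |H|\le n$, we get termwise $\bar f_{ac}\le n\,\bar y_{ac}$.

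\textbf{Flow conservation \eqref{for:flow_5:new}.} This is the main step, and it also reduces to one tree. Take $v\in H$ and let $a$ be the unique arc entering $v$ in $\vec H_c$. The component of $\vec H_c-a$ containing the head $v$ is the subtree rooted at $v$, so $n(\vec H_c,a)$ equals the size of that subtree. For each out-arc $a'=(v,w)$ of $v$ in $\vec H_c$, $n(\vec H_c,a')$ is the size of the subtree rooted at $w$. The subtree of $v$ is $v$ together with the disjoint subtrees of its children, so inflow minus outflow is exactly $1$. If $v\notin H$, both sides are $0$. Arcs not in $\vec H_c$ carry nothing for this pair.

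Summing with weights $\bar l_{Hc}$ then gives
\[
\sum_{a\in\delta^-(v)}\bar f_{ac}-\sum_{a\in\delta^+(v)}\bar f_{ac}=\sum_{H\in\mathcal S_\supset(v)}\bar l_{Hc}=\bar y(\delta^-(v),c).
\]
Two points need care here. The quantity $n(\cdot)$ must count vertices of $\vec H_c$ excluding $s$; this holds because $s$ is the tail side of every arc. The orientation must be away from $s$, so that heads lie on the subtree side.

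\textbf{Objective.} Using the in-degree identity,
\[
\sum_v\sum_c w_{vc}\,\bar y(\delta^-(v),c)=\sum_{H,c}\bar l_{Hc}\sum_{v\in H}w_{vc}=\sum_{H,c}w_{Hc}\,\bar l_{Hc},
\]
which is the objective of $\FS$.

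I expect no real obstacle beyond the conservation bookkeeping and the convention for $n(\vec H_c,a)$.
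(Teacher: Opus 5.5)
Your proposal is correct and follows essentially the same route as the paper: it establishes the in-degree identity $\bar y(\delta^-(v),c)=\sum_{H\in\mathcal S_\supset(v)}\bar l_{Hc}$ and the per-arborescence subtree-size identity (inflow at $v$ equals $\bar l_{Hc}$ plus outflow), then verifies \eqref{for:flow_1:new}--\eqref{for:flow_5:new} and the objective by summing over $H$. Your explicit check that $f_{(s,v)c}=y_{(s,v)c}=0$ for $v\notin C^{-1}(c)$ covers a detail the paper leaves implicit.
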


We close this section by observing that all dominance relations established above are strict.
For illustration, for benchmark instance $080\_010\_000.gcc$ from the unweighted random graph test set, the LP bounds satisfy $\mathrm{LP}(\FF) > \mathrm{LP}(\FC) >\mathrm{LP}(\FR) > \mathrm{LP}(\FS)$, where $\mathrm{LP}(\mathcal{M})$ denotes the optimal value of the LP relaxation of formulation~$\mathcal{M}$. Moreover, ~\Cref{eg:VC_notin_flow} below provides an instance for which the vertex cut formulation is weaker than the flow model, showing that these are incomparable. 
\begin{example}
\label{eg:VC_notin_flow}
Let $G=(V,E)$ be a star graph with $V=\{v_0,v_1,v_2,v_3\}$ and $E=\bigl\{\{v_0,v_1\},\{v_0,v_2\},\{v_0,v_3\}\bigr\}$. 
The vertex weights are
$w(0)=3$ and $w(1)=w(2)=w(3)=1$, and the initial coloring is
$C(0)=a$ and $C(1)=C(2)=C(3)=b$.
An optimal solution keeps the color of $v_0$ and recolor two of the vertices $v_1, v_2, v_3$, yielding value~$4$.
The LP relaxations of the vertex cut and flow formulations have values $4.5$ and $4$, respectively. These values are attained by $h_{v_0a}=1$ and $h_{v_1b}=h_{v_2b}=h_{v_3b}=\frac{1}{2}$ for the vertex cut formulation, and by $y_{(s,v_0),a}=f_{(s,v_0),a}=1$ and $y_{(s,v_i),b}=f_{(s,v_i),b}=\frac{1}{3}$ for each $i\in \{1,2,3\}$ for the flow formulation, with all other variables equal to zero.
\color{black}
\end{example}
\color{black}

These results complete the hierarchy of LP relaxations among the proposed formulations and provide the basis for the computational comparison in the following sections.

\section{A branch-and-price procedure for $\FS$}
\label{sec:Branch-and-price scheme for the subgraph formulation}
In this section, we present a branch-and-price algorithm for the connected subgraph formulation, including the branching strategy and the solution of the associated pricing problems.

Let $\alpha \in\R_\geq^{n} $ and $\beta \in\R_\geq^{k}$ denote the dual variables associated with constraint sets \eqref{for:connectSub_1} and \eqref{for:connectSub_2}, respectively. The reduced cost of a variable $l_{Hc}$ is $-\beta_c+\sum_{v\in H}(w_{vc}-\alpha_v)$ for all $H \in \mathcal{S}$ and $c \in \colors$. 
Hence, for each color $c \in \colors$, the pricing problem in the root node of the branch-and-bound tree reduces to finding a  
connected subgraph $H \in \mathcal{S}$ that maximizes
\[
\sum_{v \in H} (w_{vc} - \alpha_v),
\]
subject to \(H \cap C^{-1}(c) \neq \emptyset\).
Thus, the pricing problem decomposes by color and can be solved independently for each color $c$. The resulting subproblem 
is equivalent to 
MWCS with vertex weights \(w_{vc} - \alpha_v\), and an additional covering constraint on \(C^{-1}(c)\) ensuring that the selected subgraph contains at least one vertex that is initially colored $c$.
The maximum-weight connected subgraph problem has been widely studied in the literature~\citep{alvarez2013maximum,rehfeldt2019combining}.
\color{black}

We branch on the implicit variables $h_{vc}=\sum_{H\in \mathcal{S}_\supset(v)} l_{Hc}$, which are defined for all $v\in V$ and $c\in\colors$, and whose interpretation coincides with that of the variables of the vertex cut formulation.
At each branching node, we create two child nodes by imposing either $h_{vc}\leq 0$ or $h_{vc}\geq 1$.
Let \(F_0\) be the set of pairs \((v,c)\) for which the constraint \(h_{vc} \le 0\) has been imposed along the path from the root to the current node (included). Similarly, let \(F_1\) denote the set of pairs \((v,c)\) for which \(h_{vc} \ge 1\) has been imposed.
Hence, every node may include the following branching constraints:
\begin{equation}
\label{eq:F0}
    h_{vc}=\sum_{H\in \mathcal{S}_\supset(v)} l_{Hc}\leq 0,\qquad \forall (v,c)\in F_0,
\end{equation}
\begin{equation}
\label{eq:F1}
    h_{vc}=\sum_{H\in \mathcal{S}_\supset(v)} l_{Hc}\geq 1,\qquad \forall (v,c)\in F_1.
\end{equation}
Constraint~\eqref{eq:F0} holds if and only if $l_{Hc}=0$ for all $H\in\mathcal{S}_\supset (v)$. Thus, for each $(v,c)\in F_0$, all columns corresponding to connected subgraphs that contain $v$ are forbidden. In the pricing problem for color~$c$, this is enforced by excluding vertex $v$ from the subgraph to be generated.
Similarly, constraint \eqref{eq:F1} requires that the selected column for color $c$ contains vertex $v$. Accordingly, for each $(v,c)\in F_1$, the pricing problem is restricted to connected subgraphs that include $v$. 
Moreover, all pairs $(v,c')$ with $c' \neq c$ are added to $F_0$, since a vertex assigned to color $c$ cannot be assigned to any other color.

Let $\gamma \in \R_\geq^{n\times k}$ be the dual variable associated with \eqref{eq:F1}. The reduced cost of~$l_{Hc}$ thus becomes: 
\begin{equation}
-\beta_c+\sum_{v\in H}(w_{vc}-\alpha_v)+\sum_{v\in H,(v,c)\in F_1}\gamma_{vc}.
\end{equation}
Now the pricing problem for a given color $c \in \colors$ is
\begin{equation}
-\beta_c+\max_{H \in \mathcal{S}} \left(\sum_{v\in H}(w_{vc}-\alpha_v)+\sum_{v\in H,(v,c)\in F_1}\gamma_{vc}\right),
\end{equation}
which corresponds to an instance of MWCS 
with vertex weights
\begin{displaymath}
    \bar w(v) = \left\{
     \begin{array}{lr}
       w_{vc}-\alpha_v+\gamma_{vc}, & \text{ if } (v,c)\in F_1,\\
       w_{vc}-\alpha_v, & \text{ otherwise.}
     \end{array}
   \right.
\end{displaymath} 
To solve the pricing problem, we employ the same branch-and-cut scheme developed for~$\FC$ restricted to a single color.

The branch-and-bound tree of the branch-and-price algorithm is explored using best-bound-first search. In the implementation, 
Gurobi solves the LP relaxations of the restricted master problems. It also solves the restricted master problem as a MIP to potentially improve the lower bound, both at the root node and at selected nodes where the bound has remained unchanged for several iterations. 

\section{Computational experiments}
\label{sec:Computational experiments}
\subsection{Computation Setup}
\label{sec:Computational setup}
All algorithms are implemented in C++ using the LEMON Graph Library version 1.3.1~\citep{dezsHo2011lemon}, and
Gurobi~11.0.2 as the MILP solver with all Gurobi parameters set to default unless otherwise specified.
The experiments are executed on a machine running MacOS~Sonoma~14.0 (64-bit), equipped with an Apple~M2 processor with 8 cores and 16\,GB of RAM\@. 
Gurobi can use up to 8 threads. The time limit for each instance is 1800 seconds.
An open-source implementation of the algorithms, together with the datasets used in the experiments, is available at \url{https://github.com/lynnac/convex-recoloring}.

We evaluate the algorithms associated with four formulations: the connected subgraph formulation~$\FS$,
the vertex cut formulation~$\FC$, the flow formulation~$\FF$, and the representatives formulation~$\FR$.
We denote the corresponding algorithms by S, C, F, and R, respectively.
Formulation~$\FF$ is solved directly using Gurobi, including the valid inequalities~\eqref{for:flow_inequality}.
For formulations~$\FC$ and~$\FR$,
we implement branch-and-cut algorithms.
For~$\FC$, the initial relaxation contains constraints~\eqref{for:vertexCut_1} and~\eqref{for:vertexCut_lemma_retain-color}; in preliminary tests, including the color-retention constraints~\eqref{for:vertexCut_lemma_retain-color} explicitly was more effective than separating them dynamically.
\color{black}
The separation algorithms of~\eqref{for:vertexCut_lemma_connectivity} and~\eqref{for:repre_4} proceed as follows. For a given color
(or representative), if all associated variables are integral, Algorithm~1 of \citet{fischetti2017thinning} is applied; otherwise, we use the separation algorithm of~\citet[Section~3.1]{miyazawa2021partitioning}.

The vertex cut formulation can also be strengthened by additional valid inequalities from the literature, including the generalized connectivity inequalities and multiway inequalities studied by~\citet{moura2026connected}. Moreover, these inequalities for~$\FC$ can be transferred to $\FR$ through the mapping in~\eqref{map:repre_to_vc}, and the corresponding separation routines can be adapted accordingly. We implemented and tested these additional inequalities for~$\FC$ as well as their mapped counterparts for~$\FR$.
Preliminary experiments showed that separating these inequalities did not improve performance on our benchmark instances. We therefore report results for C and R without these additional valid inequalities.
\color{black}

We evaluate the running times of the four algorithms using the
performance-profile methodology introduced by~\cite{dolan2002benchmarking}.
Let $A$ denote the set of algorithms under comparison and $P$ the set of instances. For each  $p \in P$ and $a \in A$, let
$t_{p,a}$ denote the running time of $a$ on $p$. The performance ratio of
algorithm $a$ on instance $p$ is defined as $r_{p,a} = \frac{t_{p,a}}{\min\{\,t_{p,b} : b \in A\,\}}$.
If algorithm $a$ fails to solve instance $p$, we set $r_{p,a} = +\infty$. For
each $a \in A$ and each performance factor $\tau \ge 1$, define
\[
\psi_a(\tau) = \frac{\lvert\{\,p \in P : r_{p,a} \le \tau\,\}\rvert}{\lvert P\rvert}.
\]
The function $\psi_a$ is the \emph{performance profile} of algorithm $a$.
Value $\psi_a(\tau)$ is the fraction of instances solved by algorithm~$a$ within a factor~$\tau$ of the best running time obtained by any algorithm. In particular, $\psi_a(1)$ is the fraction of instances on which $a$ is fastest.
\color{black}

To evaluate instances that are not solved to optimality within the time limit, we report the relative gap, defined as $\frac{\mathrm{UB}-\mathrm{LB}}{\mathrm{LB}}$, where UB and LB denote the upper bound and lower bound obtained by the algorithm for that instance, respectively. 
To compare solution quality across algorithms on these instances, we summarize the distribution of relative gaps using boxplots. Instances that are solved to optimality by all algorithms are omitted from these plots in order to better highlight differences on more challenging instances.
\color{black}

Computational experiments are conducted on random graphs and scale-free networks. For each class, we consider both unweighted and weighted instances.
The unweighted random graphs  are taken from the benchmark set of~\cite{crp-instances-page},
which is the only existing benchmark set for $\CR$ on general graphs. These instances are generated using the Erdős-Rényi random graph model. The graphs have no weights initially; we set all vertex weights to $1$.
The initial coloring is assigned greedily based on a proper-coloring rule \citep[see][]{bondy2011graph}, where no two neighbor vertices have the same color. 
Because solving all instances is computationally demanding, we select $10$ instances (the first $10$ from the original set) for each combination of $n\in \{60, 70, \ldots, 100\}$ and $p\in\{0.1,0.2,\ldots,0.5\}$, with $n$ being the number of vertices in the graph and $p$ the probability of occurrence of each edge.
Preliminary experiments confirmed that these sample instances are sufficient to capture the relative performance trends of the algorithms. We thus have a total of $250$ instances of this type.

The weighted random graph instances are also generated using the Erdős-Rényi random graph model, with vertex weights drawn uniformly from $\{1,\ldots,10\}$. In contrast to the unweighted random instances, the initial coloring is assigned uniformly at random. We generate three instances for each combination of $n \in \{120, 150, 180\}$, $k \in \{n/10, n/5, n/3\}$, and $p \in \{0.10, 0.20\}$, where $k$ denotes the number of colors in the initial coloring. This gives $54$ weighted random graph instances.

Scale-free networks are characterized by a power-law degree distribution. This characteristic has been observed in many large networks including those describing molecular interactions (e.g., protein-protein interactions)~\citep{barabasi2004network,pastor2003evolving}, certain social networks (e.g., from epidemiology)~\citep{pastor2002epidemic}, and communication networks (such as the world-wide web)~\citep{barabasi2000scale}. Many of these network structures are directly related to application domains in which the $\CR$ problem has been studied. 

To better understand algorithmic performance on instances that resemble these real-world applications, we generate unweighted and weighted scale-free instances using the Barabási-Albert model~\citep[see][]{barabasi1999emergence}. 
The weighted scale-free instances assign each vertex a weight equal to its degree, reflecting settings in which preserving the features of highly connected vertices is more important; for example, in protein networks, highly connected proteins are more likely to be essential~\citep[see][]{jeong2001lethality}. Details of the scale-free instance generation are given in Appendix~\ref{Appendix:scale-free generation}.

\color{black}

\subsection{Computational results}
\label{sec:Computational results}
\subsubsection{Effect of \Cref{lemma:optimal maintains one color}}

We first assess the impact of~\Cref{lemma:optimal maintains one color} on S, C, and F using a small set of unweighted large instances. This set consists of 30 random graph instances, with two instances for each combination of $n \in \{80, 90, 100\}$ and $p \in \{0.1, 0.2, \ldots, 0.5\}$, and 24 scale-free instances, with three instances for each combination of $n = 120$, $k \in \{n/15, n/10, n/5, n/3\}$, and $d \in \{2, 4\}$. These instances were chosen to be computationally challenging.
\Cref{fig:perf_and_gap_VC_lemma} shows that~\Cref{lemma:optimal maintains one color}~improves the performance of algorithm~C, leading to both faster solution times (left plot) and smaller optimality gaps (right plot).
For algorithms~S and~F, the inclusion of \Cref{lemma:optimal maintains one color} 
is also beneficial, although its effect on performance is less pronounced.
Therefore, all subsequent experiments are conducted using the model variants that incorporate \Cref{lemma:optimal maintains one color}.

\perfset{performance_new/runtime_data_VC_lemma.csv}

    \begin{figure}[t!]
        \centering
        \begin{subfigure}[t]{0.45\textwidth}
        \centering
        \begin{tikzpicture}
            \begin{axis}[
            height=6cm,
            extra x ticks={1},
            grid=both, 
            no marks, xlabel={Normalized time $\tau$}, ylabel={Proportion of  instances $\rho$}, xmin=1, xmax=20,
            label style={font=\footnotesize},
            tick label style={font=\footnotesize},    
            title style={font=\small},
            legend style={font=\footnotesize, at={(0.95,0.3)},anchor=east}]
                \addprofiles{2}{20}
                \legend{\textsc{without}, \textsc{with}, \textsc{r}, \textsc{f}}
            \end{axis}
        \end{tikzpicture}
        \end{subfigure}
        \hfill
        \begin{subfigure}[t]{0.45\textwidth}
        \centering
  \pgfplotstableread[col sep=comma]{performance_new/gap_data_VC_lemma_no_zero_rows.csv}\gapdata

    \begin{tikzpicture}
        \begin{axis}[
            y=0.3cm,
            x=0.5cm,
            boxplot/draw direction=y,
            xtick={1,2},
            xticklabels={\textsc{without}, \textsc{with}},
            ylabel={Gap (\%)},
            boxplot/whisker range=1000,        
            label style={font=\footnotesize},
            tick label style={font=\footnotesize},
            title style={font=\small},
            x tick label style={
                rotate=45,
                anchor=east}
        ]
            \addplot+[boxplot,fill,draw=black] table[y=0] {\gapdata};
            \addplot+[boxplot,fill,draw=black] table[y=1] {\gapdata};
        \end{axis}
    \end{tikzpicture}

        \end{subfigure}%
        \caption{Performance profiles and gaps for algorithm~C without (\textsc{without}) and with (\textsc{with})~\Cref{lemma:optimal maintains one color}.}
        \label{fig:perf_and_gap_VC_lemma}
    \end{figure}

\subsubsection{Unweighted instances}
\label{subsub:unweightedresults}

Using the same subset of large unweighted instances as in the previous paragraph, we compare the LP bounds obtained from the four formulations. For every instance in this set, the upper bounds satisfy
\(\mathrm{LP}(\FF) \geq \mathrm{LP}(\FC) \geq \mathrm{LP}(\FR) \geq \mathrm{LP}(\FS).
\)
These empirical results are consistent with the polyhedral comparison presented in Section~\ref{sec:Polyhedral comparison}, where the connected subgraph formulation was shown to provide the strongest LP relaxation regardless of the instance characteristics. The results further suggest that both the representatives and vertex cut formulations offer a meaningful strengthening over the flow formulation in practice, as the latter consistently yields the weakest LP bounds across all tested instance classes.
\Cref{table:LP_comparison}~reports the average LP relaxation values, as well as their average percentages relative to the subgraph formulation, further illustrating the relative strength of  the four formulations.

\begin{table}[t]
\centering
\caption{Comparison of average LP relaxation values on large unweighted instances.}
\label{table:LP_comparison}
\begin{tabular}{lcccc}
\toprule
 & Subgraph & Representatives & Vertex cut & Flow \\
\midrule
LP relaxation & 72.23 & 73.18 & 77.76 & 82.51 \\
Percentage of subgraph & 100\% & 101.32\% & 107.65\% & 114.23\% \\
\bottomrule
\end{tabular}
\end{table}

\perfset{performance_new/runtime_data_random_all.csv}

    \begin{figure}[t!]
        \centering
        \begin{subfigure}[t]{0.45\textwidth}
        \centering
        \begin{tikzpicture}
            \begin{axis}[
            height=6cm,
            extra x ticks={1},
            grid=both, 
            no marks, xlabel={Normalized time $\tau$}, ylabel={Proportion of  instances $\rho$}, xmin=1, xmax=20,
            label style={font=\footnotesize},
            tick label style={font=\footnotesize},    
            title style={font=\small},
            legend style={font=\footnotesize, at={(0.95,0.3)},anchor=east}]
                \addprofiles{4}{20}
                \legend{\textsc{s}, \textsc{c}, \textsc{r}, \textsc{f}}
            \end{axis}
        \end{tikzpicture}
        \end{subfigure}
        \hfill
        \centering
        \begin{subfigure}[t]{0.2\textwidth}
        \centering
    \pgfplotstableread[col sep=comma]{performance_new/gap_data_random_all_no_zero_rows.csv}\gapdata

    \begin{tikzpicture}
        \begin{axis}[
            y=0.42cm,
            x=0.5cm,
            boxplot/draw direction=y,
            xtick={1,2,3,4},
            xticklabels={\textsc{s}, \textsc{c}, \textsc{r}, \textsc{f}},
            ylabel={Gap (\%)},
            boxplot/whisker range=1000,        
            label style={font=\footnotesize},
            tick label style={font=\footnotesize},
            title style={font=\small}
        ]
            \addplot+[boxplot,fill,draw=black] table[y=0] {\gapdata};
            \addplot+[boxplot,fill,draw=black] table[y=1] {\gapdata};
            \addplot+[boxplot,fill,draw=black] table[y=2] {\gapdata};
            \addplot+[boxplot,fill,draw=black] table[y=3] {\gapdata};
        \end{axis}
    \end{tikzpicture}
        \end{subfigure}%
          \hfill
        \begin{subfigure}[t]{0.2\textwidth}
        \centering
    \pgfplotstableread[col sep=comma]{performance_new/gap_data_random_all_no_flow.csv}\gapdata

    \begin{tikzpicture}
        \begin{axis}[
            y=1cm,
            x=0.5cm,
            boxplot/draw direction=y,
            xtick={1,2,3},
            xticklabels={\textsc{s}, \textsc{c}, \textsc{r}},
            ylabel={Gap (\%)},
            boxplot/whisker range=1000,        
            label style={font=\footnotesize},
            tick label style={font=\footnotesize},
            title style={font=\small}
        ]
            \addplot+[boxplot,fill,draw=black] table[y=0] {\gapdata};
            \addplot+[boxplot,fill,draw=black] table[y=1] {\gapdata};
            \addplot+[boxplot,fill,draw=black] table[y=2] {\gapdata};
        \end{axis}
    \end{tikzpicture}

        \end{subfigure}%
        \caption{Performance profiles and gaps for the unweighted random instances.}
        \label{fig:perf_and_gap_random_all}
    \end{figure}

Out of the 250 instances
S, C, R, and F fail to solve 27, 17, 4, and 231 instances, respectively. The performance profiles  in 
\Cref{fig:perf_and_gap_random_all} show that algorithm~R is fastest for more than 58\% of 
the instances and clearly outperforms S, C, and~F overall. Algorithm~C is the second-best method in terms of running time, followed by S, while F performs significantly worse across the test set.
The figure also reports two gap boxplots: one including all four algorithms and one excluding F\@. The boxplots confirm the trend observed in the performance profiles. In particular, algorithm R achieves zero optimality gap on the vast majority of instances, while C performs second best, followed by~S\@. In contrast, F exhibits substantially larger gaps and variability. This behavior reflects the limited practical effectiveness of the flow formulation, likely due to its weaker LP bound, as observed above.  The second boxplot
excludes algorithm F and provides a clearer comparison among S, C, and R  by omitting instances that are solved to optimality by all three of these algorithms.  These plots highlight the robustness of algorithm R, whose gap distribution remains tightly concentrated around zero (the whiskers of R are at zero), whereas C and especially S exhibit larger variability.  The average optimality gaps across all 250 instances are 0.30\%, 0.12\%, 0.02\%, and 4.53\% for S, C, R, and F, respectively. These results indicate that the representatives formulation provides the strongest overall performance on this class of instances. 
Given its substantially weaker performance, we exclude formulation F from the remaining analysis, as it does not contribute meaningfully to the comparison among the best-performing methods.

To obtain a more detailed comparison of the three fastest algorithms, we next consider 135 unweighted scale-free network instances. The performance 
profiles depicted in \Cref{fig:perf_and_gap_scaleFree_all} show that C is the fastest  for over 51\% of the instances, followed by R (about  42\%), while S leads on only about 6\% of the instances. 
The boxplots show that R solves all instances to optimality, whereas S and C exhibit nonzero gaps on a subset of instances, with noticeably higher variability.
Out of the 135 instances, S and C fail to solve 13 and 10, respectively.
The average gaps across all instances are 0.23\%, 0.33\%, and 0.00\% for S, C, and R respectively.
Overall, these results indicate that algorithms C and R are particularly effective on scale-free networks, with R providing more robust performance across instances within the time limit, while C is sometimes faster.
This behavior can be partially explained by the structure of scale-free networks, which  tend to have small minimal vertex cuts.  This favors C and makes it the fastest method on many  easier instances. 
For the harder instances, however, the weaker LP bound of C becomes more limiting, leading to larger gaps.
In contrast, sparse scale-free graphs are structurally close to trees. Since the pricing problem in S is polynomial-time solvable on trees, the column generation is faster and produces smaller final gaps for difficult instances.
 \color{black}

\perfset{performance_new/runtime_data_PPI_all.csv}

    \begin{figure}[t!]
        \centering
        \begin{subfigure}[t]{0.45\textwidth}
        \centering
        \begin{tikzpicture}
            \begin{axis}[
            height=6cm,
            extra x ticks={1},
            grid=both, 
            no marks, xlabel={Normalized time $\tau$}, ylabel={Proportion of  instances $\rho$}, xmin=1, xmax=20,
            label style={font=\footnotesize},
            tick label style={font=\footnotesize},    
            title style={font=\small},
            legend style={font=\footnotesize, at={(0.95,0.3)},anchor=east}]
                \addprofiles{3}{20}
                \legend{\textsc{s}, \textsc{c}, \textsc{r}, \textsc{f}}
            \end{axis}
        \end{tikzpicture}
        \end{subfigure}
        \hfill
        \begin{subfigure}[t]{0.45\textwidth}
        \centering
   \pgfplotstableread[col sep=comma]{performance_new/gap_data_PPI_all_no_zero_rows.csv}\gapdata

    \begin{tikzpicture}
        \begin{axis}[
            y=0.35cm,
            x=0.5cm,
            boxplot/draw direction=y,
            xtick={1,2,3,4},
            xticklabels={\textsc{s}, \textsc{c}, \textsc{r}, \textsc{f}},
            ylabel={Gap (\%)},
            boxplot/whisker range=1000,        
            label style={font=\footnotesize},
            tick label style={font=\footnotesize},
            title style={font=\small}
        ]
            \addplot+[boxplot,fill,draw=black] table[y=0] {\gapdata};
            \addplot+[boxplot,fill,draw=black] table[y=1] {\gapdata};
            \addplot+[boxplot,fill,draw=black] table[y=2] {\gapdata};
        \end{axis}
    \end{tikzpicture}

        \end{subfigure}%
        \caption{Performance profiles and gaps for the unweighted scale-free instances.}
        \label{fig:perf_and_gap_scaleFree_all}
    \end{figure}

\subsubsection{Weighted instances}

On the 54 weighted random graph instances, algorithm S fails to produce a valid upper bound within the time limit for three instances. For these cases, we use the trivial upper bound obtained by recoloring all vertices in the graph.
The results in~\Cref{fig:perf_and_gap_weighted_random} show a different pattern compared to the unweighted case. Algorithm C is the fastest method on over $90\%$ of the instances, while algorithm R ranks  second in running time (it is the fastest method for around $8\%$ of the instance set) and 
maintains small optimality gaps. Algorithm~S is rarely the fastest method.
More precisely, the numbers of instances not solved by S, C, and~R are 8, 1, and~3, respectively. The average final gaps across all 54 instances are 1.10\%, 0.02\%, and 0.03\% for S, C, and R, respectively. 
This slight shift in performance compared to Section~\ref{subsub:unweightedresults} suggests that the vertex cut formulation is particularly effective when vertex weights introduce heterogeneity in the problem structure. 
The weights help distinguish among competing recolorings and guide the search more effectively,
while the relatively simple color-indexed assignment structure allows the separation routines to exploit this information effectively. 
\color{black}
Overall, on weighted random graphs, the vertex cut formulation provides the strongest performance among the three algorithms.

\perfset{performance_new/runtime_data_weighted_random.csv}

    \begin{figure}[t!]
        \centering
        \begin{subfigure}[t]{0.45\textwidth}
        \centering
        \begin{tikzpicture}
            \begin{axis}[
            height=6cm,
            extra x ticks={1},
            grid=both, 
            no marks, xlabel={Normalized time $\tau$}, ylabel={Proportion of  instances $\rho$}, xmin=1, xmax=20,
            label style={font=\footnotesize},
            tick label style={font=\footnotesize},    
            title style={font=\small},
            legend style={font=\footnotesize, at={(0.95,0.3)},anchor=east}]
                \addprofiles{3}{20}
                \legend{\textsc{s}, \textsc{c}, \textsc{r}, \textsc{f}}
            \end{axis}
        \end{tikzpicture}
        \end{subfigure}
        \hfill
        \begin{subfigure}[t]{0.45\textwidth}
        \centering
   \pgfplotstableread[col sep=comma]{performance_new/gap_data_weighted_random_no_zero_rows.csv}\gapdata

    \begin{tikzpicture}
        \begin{axis}[
            y=0.2cm,
            x=0.5cm,
            boxplot/draw direction=y,
            xtick={1,2,3,4},
            xticklabels={\textsc{s}, \textsc{c}, \textsc{r}, \textsc{f}},
            ylabel={Gap (\%)},
            boxplot/whisker range=1000,        
            label style={font=\footnotesize},
            tick label style={font=\footnotesize},
            title style={font=\small}
        ]
            \addplot+[boxplot,fill,draw=black] table[y=0] {\gapdata};
            \addplot+[boxplot,fill,draw=black] table[y=1] {\gapdata};
            \addplot+[boxplot,fill,draw=black] table[y=2] {\gapdata};
        \end{axis}
    \end{tikzpicture}

        \end{subfigure}%
        \caption{Performance profiles and gaps for the weighted random instances.}

\label{fig:perf_and_gap_weighted_random}
    \end{figure}

For the weighted scale-free instances,
\Cref{fig:perf_and_gap_weighted_PPI_degree} shows that R is clearly the most effective method. The percentages of instances not solved by S and C are approximately $16\%$ and $13\%$, respectively, whereas R solves all instances to optimality within the time limit. Moreover, R is the fastest algorithm for more than $78\%$ of the test set, while C and S are fastest for approximately $18\%$ and $4\%$ of the instances, respectively. The average final gaps over all 135 instances are $3.09\%$, $1.01\%$, and $0.00\%$ for S, C, and R, respectively. These results confirm the robustness of the representatives formulation, which for the weighted scale-free case consistently solves all instances to optimality and dominates the other methods in both running time and solution quality.

\perfset{performance_new/runtime_data_weighted_PPI_degree.csv}

    \begin{figure}[t!]
        \centering
        \begin{subfigure}[t]{0.45\textwidth}
        \centering
        \begin{tikzpicture}
            \begin{axis}[
            height=6cm,
            extra x ticks={1},
            grid=both, 
            no marks, xlabel={Normalized time $\tau$}, ylabel={Proportion of  instances $\rho$}, xmin=1, xmax=20,
            label style={font=\footnotesize},
            tick label style={font=\footnotesize},    
            title style={font=\small},
            legend style={font=\footnotesize, at={(0.95,0.3)},anchor=east}]
                \addprofiles{3}{20}
                \legend{\textsc{s}, \textsc{c}, \textsc{r}, \textsc{f}}
            \end{axis}
        \end{tikzpicture}
        \end{subfigure}
        \hfill
        \begin{subfigure}[t]{0.45\textwidth}
        \centering
   \pgfplotstableread[col sep=comma]{performance_new/gap_data_weighted_PPI_degree_no_zero_rows.csv}\gapdata

    \begin{tikzpicture}
        \begin{axis}[
            y=0.045cm,
            x=0.5cm,
            boxplot/draw direction=y,
            xtick={1,2,3,4},
            xticklabels={\textsc{s}, \textsc{c}, \textsc{r}, \textsc{f}},
            ylabel={Gap (\%)},
            boxplot/whisker range=1000,        
            label style={font=\footnotesize},
            tick label style={font=\footnotesize},
            title style={font=\small}
        ]
            \addplot+[boxplot,fill,draw=black] table[y=0] {\gapdata};
            \addplot+[boxplot,fill,draw=black] table[y=1] {\gapdata};
            \addplot+[boxplot,fill,draw=black] table[y=2] {\gapdata};
        \end{axis}
    \end{tikzpicture}

        \end{subfigure}%
        \caption{Performance profiles and gaps for the weighted scale-free instances.}

\label{fig:perf_and_gap_weighted_PPI_degree}
    \end{figure}

\subsubsection{Conclusion of the computational results} 

Overall, the results show that the representatives formulation performs best in most settings, followed by the vertex cut formulation, which is particularly effective on weighted random instances.
In particular, the experiments reveal a trade-off between speed and robustness, with the vertex cut formulation often achieving the shortest running times and the representatives formulation providing more consistent solution quality.

These findings suggest that LP bound strength alone does not fully determine practical performance. The representatives formulation appears to strike a favorable computational balance: although its LP relaxation is weaker than that of the connected subgraph formulation, the empirical difference between their LP bounds remains modest on the tested instances. At the same time, it avoids the overhead of branch-and-price and benefits from an ordering rule that reduces symmetry. 

In contrast,
although the connected subgraph formulation provides the strongest LP bounds, it is generally slower in practice due to the need to repeatedly solve $\NP$-hard pricing subproblems within 
the branch-and-price framework.
This creates a competitive disadvantage relative to branch-and-cut approaches, which can better exploit built-in solver features such as branching, cutting planes, heuristics, and node selection strategies.
The flow formulation performs worst overall, likely due to its large number of arc-color variables, continuous flow variables, and big-\(M\)-type capacity constraints, which collectively weaken the LP relaxation and increase computational effort.

\section{Conclusion and further research}
\label{sec:Conclusion}

In this work, we developed and evaluated exact solution methods for the convex recoloring problem on general graphs.  We proposed four mixed-integer linear programming formulations, including a novel compact flow formulation and a representatives formulation, and we strengthened the models using a structural result ensuring the existence of optimal expanding solutions. 
Each formulation was paired with a dedicated solution approach, including branch-and-price and branch-and-cut algorithms, enabling a systematic comparison of distinct algorithmic paradigms.

Overall, the representatives formulation emerges as the most effective approach in practice, offering the best balance between formulation strength and computational efficiency. The vertex cut formulation remains competitive, particularly on weighted random instances, where it frequently achieves the fastest running times. In contrast, although the connected subgraph formulation yields the strongest LP bounds, its branch-and-price algorithm is typically slower in practice, which is likely due to the complexity of the pricing problem. The flow formulation consistently performs worst, reflecting both its weaker LP relaxation and its relatively large formulation size.  Our results highlight that LP relaxation strength alone does not determine computational performance; rather, the interaction between formulation size, structure, and algorithmic framework plays a crucial role.

Future research could focus on strengthening the representatives formulation through additional valid inequalities and improving the efficiency of the branch-and-price framework for the connected subgraph formulation. Although our computational results suggest that the LP relaxation of the flow formulation is not stronger than that of the representatives formulation, the exact theoretical relationship between these two models remains an open theoretical question. Investigating hybrid approaches that combine features of different formulations also appears to be a promising direction.

\bibliographystyle{plainnat}
\bibliography{bibliography}

\newpage
\appendix

\section{Proofs}
\label{Appendix:proofs}
\subsection{Proof of \Cref{lemma:optimal maintains one color}}
\label{Appendix:proof of lemma:optimal maintains one color}
\begin{proof}{Proof}
Let $I=(G,C,w)$, and let $C'$ be an optimal convex recoloring of $I$.
Suppose $C'$ is not expanding. 
If $C'$ only uses one color, we can construct another recoloring $C''$ by recoloring all the vertices with one of the original colors used in $C$. 
This $C''$ is thus expanding, and its objective value cannot be worse than that of $C'$, so $C''$ is also optimal.

Suppose now that~$C'$ uses more than one color, and let $c$ be a color used in $C'$ such that no vertex $v \in V(G)$ satisfies $C(v)=C'(v)=c$. Since $G$ is connected, there exists an edge $(u,v)$ with $C'(u)=c$ and $C'(v)=d\neq c$. We can construct a new recoloring $C''$ by assigning color $d$ to all vertices with color $c$ in $C'$, leaving all other vertex colors unchanged. 
It is clear that $C''$ is convex, $R(C'') \subseteq R(C')$, and so $\sum_{v \in R(C'')} w(v)\leq \sum_{v \in R(C')} w(v)$. 
Hence, $C''$ is also optimal. 
If $C^{\prime\prime}$ is still not expanding, we repeat this process successively until $C''$ becomes expanding or until there is only one color used in the new recoloring.  In the latter case, we can revert to the first part of this proof to establish the lemma.  \hfill \Halmos
\end{proof}

\subsection{Proof of~\Cref{proposition:correctness_flow}}
\begin{proof}{Proof}
Let $C'\colon V\to \colors$ be a convex recoloring of $(G,C)$.
Let $\mathcal{T}=\{T_c\}_{c\in \colors}$ be a collection of pairwise disjoint arborescences in $D\setminus \{s\}$ where $T_c$ spans vertices $v\in V$ such that $C'(v)=c$, for every $c \in \colors$. It is clear that~$\mathcal{T}$ exists since $C'$ is a convex coloring.
We define a vector $(\bar y, \bar f) \in \B^{(2m+n) k} \times \R^{(2m+n) k}_{\ge}$ as follows.
The nonzero entries of $\bar y$ are precisely $\bar y_{ac}=1$ for all $c \in \colors$ and $a \in A(T_c)$, and $\bar y_{(sr)c}=1$ where~$r$ is the root of~$T_c$.
The nonzero entries of~$\bar f$ are as follows: $\bar f_{ac}$ equals the cardinality of the (maximal) sub-arborescence of $T_c-a$ that contains the head of $a$ for all $c \in \colors$ and $a \in A(T_c)$, and $\bar f_{(sr)c}=|V(T_c)|$, where~$r$ is the root of $T_c$.
It follows from the definition of $w$ that the value of~$(\bar y, \bar f)$ equals the sum of the weights of the vertices that have the same color in the initial coloring~$C$ and in the recoloring~$C'$.

Clearly, $(\bar y, \bar f)$ satisfies~\eqref{for:flow_1:new}.
Every vertex of $G$ belongs to at most one arborescence of $\mathcal{T}$, and therefore constraints~\eqref{for:flow_2:new} 
hold for $(\bar y, \bar f)$.
Note that the vector also satisfies~\eqref{for:flow_4:new} since,  for each $c \in \colors$,  $\bar f_{ac}$ is positive only if $a$ belongs to $A(T_c)$ or is an arc from $s$ to the root of $T_c$, and  its value is at most the cardinality of the arborescence.
Additionally, constraints~\eqref{for:flow_5:new} hold for $(\bar y, \bar f)$ because every vertex in~$T_c$ consumes exactly one unit of incoming flow, i.e., $y(\delta^-(v),c)=1$.

For the converse, we prove that every feasible solution of this formulation induces a convex recoloring of $G$.
Let $(\bar y, \bar f)$ be a feasible solution to~\eqref{for:flow_1:new}--\eqref{for:flow_7:new}.
For each color $c \in \colors$, define $T_c := D[\{a\in A(D): \bar y_{ac}=1\}]$ and $V_c := V(T_c)\setminus\{s\}$.
Because of~\eqref{for:flow_2:new}, the sets $V_c$, $c\in\colors$, are pairwise
disjoint and every vertex in $V_c$ has indegree at most one.
Constraints~\eqref{for:flow_4:new}, \eqref{for:flow_5:new}, and the nonnegativity of the flows guarantee that $T_c$ contains no directed cycle and that every nonempty component receives flow from $s$.
Together with~\eqref{for:flow_1:new}, this implies that $T_c$ is an arborescence for every~$c \in \colors$.
Therefore, $\{V_c\}_{c \in \colors}$ induces a convex recoloring of $G$ with an objective value equal to 
\begin{align*}
\sum_{c\in \colors} \sum_{v \in V_c : C(v)=c} w(v) = \sum_{c\in \colors}\sum_{v \in V : C(v)=c} w(v) \bar y(\delta^-(v),c)= \sum_{c\in \colors}\sum_{v \in V } w_{vc} \bar y(\delta^-(v),c). \tag*{\Halmos}
\end{align*}
\end{proof}

\subsection{Proof of~\Cref{proposition:correctness_repre}}
\begin{proof}{Proof}
Let $C'$ be an \emph{expanding} (with respect to $C$) convex recoloring.
For each $c \in \colors'$, let $v_c \in V$  be
the $\preceq$-minimal vertex in
$\{v\in V : C'(v)=C(v)=c\}.$
Define a vector $\bar x \in \B^{n^2}$ such that its nonzero entries are precisely $\bar x_{u v_c}=1$ for all $u \in V$ with $C'(u)=c$.
It is clear that $\bar x$ satisfies~\eqref{for:repre_1}--\eqref{for:repre_3}.
Moreover, the convexity of $C'$ implies that $\bar x$ satisfies constraints~\eqref{for:repre_4}, and so it is feasible to~$\FR$.

Let~$\bar x \in \B^{n^2}$ be a solution to $\FR$.
Note that each color in $\colors$ has at most one representative vertex by~\eqref{for:repre_3}, and define $\colors'= \{C(v) \in \colors : \bar x_{vv}=1\}$.
For each $v \in V$ such that $\bar x_{vv}=1$, define $H_c=\{u\in V :  \bar x_{uv}=1\}$ where $c = C(v)$.
Because of~\eqref{for:repre_1}, the sets in $\mathcal{H} := \{H_c\}_{c \in \colors'}$ are pairwise disjoint.
Consider $c \in \colors'$, and let $v \in V$ be the representative of $c$, that is, $\bar x_{vv}=1$ and $C(v)=c$. 
Suppose now that $H_c$ is not connected, and let $u \in H_c$ belong to a component that does not contain $v$.
Clearly, there is $Z \subseteq V\setminus H_c$ that is a minimal $(u,v)$-vertex cut.
Hence $\bar x_{uv} = 1 > 0 = \sum_{z \in Z} \bar x_{zv}$, which contradicts the fact that $\bar x$ satisfies~\eqref{for:repre_4}.
Consequently, each $H_c$ is connected, and therefore $\mathcal{H}$ induces an expanding (w.r.t. $C$) convex recoloring of value $\sum_{c \in \colors'} \sum_{u \in H_c : C(u)=c} w(u) =  \sum_{u \in V} \sum_{v\in \rho(u) : C(u)=C(v)} w(u) \bar x_{uv} = \sum_{u \in V} \sum_{v\in \rho(u) : C(u)=C(v)} \hat{w}_{uv} \bar x_{uv}$.
It follows from~\Cref{lemma:optimal maintains one color} that an optimal solution of~$\FR$ yields an optimal solution to \CR, and so the formulation is correct. \hfill \Halmos
\end{proof}

\subsection{Proof of~\Cref{prop:repre_in_vertexcut_projection}}
\begin{proof}{Proof}
    Consider $\bar x \in \mathcal{R}_{\textsc{r}}$, and define the vector $\bar h=\xi(\bar x)$.
    For each $u\in V$, we have $\sum_{c\in \colors} \bar h_{uc}=\sum_{c\in \colors}\sum_{v\in C^{-1}(c)\cap \rho(u)}\bar x_{uv}\le 1$ as $\bar x$ satisfies~\eqref{for:repre_1}.
    Thus $\bar h$ satisfies~\eqref{for:vertexCut_1}. 
    By \eqref{for:repre_2}, $\bar{x}_{vr}\leq \bar{x}_{rr}$ for all $ v\in V$ and $r \in \rho(v)$. 
    Thus, for each $c\in \colors,v\in V\setminus C^{-1}(c)$, $\bar{h}_{vc}=\sum_{r\in C^{-1}(c)\cap \rho(v)}\bar x_{vr}\leq \sum_{r\in C^{-1}(c)\cap \rho(v)}\bar{x}_{rr}\leq \sum_{r\in C^{-1}(c)}\bar{h}_{rc}$, where the last inequality holds because $\bar x_{rr}$ appears in $\bar{h}_{rc}$. Consequently, \eqref{for:vertexCut_lemma_retain-color} is satisfied by $\bar h$.

It remains to show that $\bar h$ satisfies~\eqref{for:vertexCut_lemma_connectivity}.
    Consider $c\in \colors$, $u\in C^{-1}(c)$, $v\notin N(u)$, and $Z\in \Gamma(u,v)$.
    We claim that, for every  $r\in C^{-1}(c)$, the following inequality holds:
\begin{equation}
\label{for:repre_star}
    \mathds{1}[r\in \rho(u)]\,\bar x_{ur}+\mathds{1}[r\in \rho(v)]\,\bar x_{vr}
    \;\leq\; \sum_{z\in Z\,:\,r\in \rho(z)}\bar x_{zr}+\bar x_{rr}.
\end{equation}
Summing~\eqref{for:repre_star} over $r\in C^{-1}(c)$ and invoking~\eqref{for:repre_3} yields
$\bar h_{uc}+\bar h_{vc}\leq \sum_{z\in Z}\bar h_{zc}+1$.

To prove~\eqref{for:repre_star}, let $V_u,V_v$ be the components of $G-Z$ containing $u$ and $v$, respectively. 
Clearly, $V_u$ and $V_v$ are disjoint since $Z\in \Gamma(u,v)$. The following cases arise:
\begin{enumerate}
    \item If $r\in\{u,v\}$, say $r=u$ without loss of generality, then~\eqref{for:repre_star} becomes $\bar x_{uu}+\bar x_{vu}\leq \sum_{z\in Z\,:\,u\in \rho(z)}\bar x_{zu}+\bar x_{uu}$, which is implied by~\eqref{for:repre_4}.
    \item If $r\in Z$, then $\bar x_{rr}\leq \sum_{z\in Z\,:\,r\in \rho(z)}\bar x_{zr}$. By~\eqref{for:repre_2}, $\bar x_{ur}\leq \bar x_{rr}$ and $\bar x_{vr}\leq \bar x_{rr}$, hence $\bar x_{ur}+\bar x_{vr}\leq 2\bar x_{rr}\leq \sum_{z\in Z\,:\,r\in \rho(z)}\bar x_{zr}+\bar x_{rr}$.
    \item If $r\in V_v\setminus\{v\}$, then $u$ and $r$ lie in different components of $G-Z$, so $Z$ is a (not necessarily minimal) $(u,r)$-vertex cut. We also know $\{u,r\}\notin E$, otherwise they cannot be separated into different components of $G-Z$. 
    Since $Z$ is a $(u,r)$-cut, there exists a minimal $(u,r)$-cut $Z'\subseteq Z$. Applying~\eqref{for:repre_4} gives $\bar x_{ur}\leq \sum_{z\in Z': r\in \rho(z)}\bar x_{zr}\leq \sum_{z\in Z: r\in \rho(z)}\bar x_{zr}$. 
    Combining this inequality and $\bar x_{vr}\leq \bar x_{rr}$, we obtain~\eqref{for:repre_star}. 
    The case $r\in V_u\setminus\{u\}$ is handled analogously.
    Moreover,  if $r$ belongs to a component of $G-Z$ other than $V_u$ and $V_v$, then $Z$ separates $r$ from $u$, and the same argument as in  the case $r\in V_v\setminus\{v\}$ applies.
\end{enumerate}
This exhausts all cases and establishes~\eqref{for:repre_star},  and therefore $\bar h$ satisfies~\eqref{for:vertexCut_lemma_connectivity}.

Regarding the objective value of~$\bar h$, it follows from  the definitions of $\hat w$ and $w$ that

\[
    \sum_{v \in V} \sum_{c \in \colors} w_{vc} \, \bar h_{vc}=
    \sum_{v\in V} w(v)\, \bar h_{v,C(v)}
    = \sum_{v\in V} w(v) \!\!\sum_{r\in C^{-1}(C(v))\cap \rho(v)}\!\! \bar x_{vr}
    = \sum_{v\in V}\sum_{r\in \rho(v)} \hat w_{vr}\, \bar x_{vr},
\]
Therefore, $\xi$ preserves the objective value.
\hfill\Halmos
\end{proof}

\subsection{Proof of~\Cref{prop:cs_in_repre}}
\begin{proof}{Proof}
Consider $\bar l\in\mathcal{R}_{\textsc{s}}$, and define
$\bar x=\phi(\bar l)$. We next show that $\bar x$ satisfies
\eqref{for:repre_1}--\eqref{for:repre_4}.

For each $u\in V$, we have  $$\sum_{v\in\rho(u)} \bar x_{uv}
    \leq
    \sum_{H\in\mathcal{S}_{\supset}(u)}
    \sum_{c\in\colors} \bar l_{Hc}
    \leq 1,$$
where the last inequality follows from~\eqref{for:connectSub_1}. Hence
$\bar x$ satisfies~\eqref{for:repre_1}.

Consider $u\in V$ and $v\in\rho(u)$. 
Since every $H \in \mathcal{S}$ contributing
to $\bar x_{uv}$ satisfies $u\in H$ and $r(H,C(v))=v \in H$, and so the same term also contributes to $\bar x_{vv}$. Hence, \eqref{for:repre_2} holds for $\bar x$.

Let $c\in\colors$.
For every $H\in \mathcal{S}(G)$ with
$H\cap C^{-1}(c)\neq\emptyset$, there exists a vertex $r(H,c)$ selected as the
representative of $c$ in $H$. 
Therefore
\[
    \sum_{v\in C^{-1}(c)} \bar x_{vv}=
    \sum_{v\in C^{-1}(c)} \sum_{\substack{H\in\mathcal{S}:\\ r(H,c)=v}}
    \bar l_{Hc} =
    \sum_{\substack{H\in\mathcal{S}:\\ H\cap C^{-1}(c)\neq\emptyset}}
    \bar l_{Hc}
    \leq
    \sum_{H\in\mathcal{S}} \bar l_{Hc}
    \leq 1,
\]
where the last inequality follows from~\eqref{for:connectSub_2}. Therefore,
\eqref{for:repre_3} holds for $\bar x$.

Consider $u\in V$, $v\in\rho(u)$
such that $\{u,v\}\notin E$, and let $Z\in\Gamma(u,v)$ and $c=C(v)$.
We investigate a subset $H \in \mathcal{S}$ with $u\in H$ and
$r(H,c)=v$. 
Since $Z$ separates $u$ and $v$, and $G[H]$ is connected, it holds that $H\cap Z\neq\emptyset$. 
Moreover, for every
$z\in H\cap Z$, we have $v\in\rho(z)$, which can be seen as follows.
If~$C(z)\neq c$ then $v\in\rho(z)$ follows directly
from the definition of $\rho(z)$; if $C(z)=c$ then $v=r(H,c)\preceq z$, which also implies $v\in\rho(z)$.
Consequently, each term contributing to $\bar x_{uv}$ contributes to at least one term
in $\sum_{\substack{z\in Z: v\in\rho(z)}} \bar x_{zv}$, and therefore \eqref{for:repre_4} is satisfied by $\bar x$. We conclude that~$\bar x\in\mathcal{R}_{\textsc{r}}$.

We now verify that the objective value is preserved.
First, by the definitions of $\phi$ and $\hat w$, we have 
\begin{align}
\sum_{u\in V}\sum_{v\in\rho(u)} \hat w_{uv}\,\bar x_{uv}
=\sum_{u\in V}w(u)\sum_{\substack{v\in\rho(u):\\ C(v)=C(u)}}\bar x_{uv}=\sum_{c\in\colors}\sum_{u\in C^{-1}(c)} w(u)
\sum_{\substack{v\in\rho(u):\\ C(v)=c}}\
\sum_{\substack{H\in\mathcal{S}_{\supset}(u):\\ r(H,c)=v}}\bar l_{Hc}. \label{eq:objvalue:subgraph}
\end{align}
Consider now $c\in\colors$, $u\in C^{-1}(c)$.
For every $H\in\mathcal{S}_{\supset}(u)$, it holds that $C(r(H,c))=c$, and \(r(H,c)\in\rho(u)\) since \(r(H,c)\) is the minimum vertex in
\(H\cap C^{-1}(c)\), and \(u\in H\cap C^{-1}(c)\). Hence
\[
\sum_{\substack{v\in\rho(u):\\ C(v)=c}}\
\sum_{\substack{H\in\mathcal{S}_{\supset}(u):\\ r(H,c)=v}}\bar l_{Hc}
\;=\;\sum_{\substack{H\in\mathcal{S}_{\supset}(u)}}\bar l_{Hc}.
\]
By substituting the previous identity in \eqref{eq:objvalue:subgraph}, we obtain
\begin{equation*}
\begin{aligned}
\sum_{u\in V}\sum_{v\in\rho(u)} \hat w_{uv}\,\bar x_{uv}&=
\sum_{c\in\colors}\sum_{u\in C^{-1}(c)}\!w(u)\!\sum_{H\in\mathcal{S}_{\supset}(u)}\bar l_{Hc}\\
&=\sum_{c\in\colors}\sum_{H \in \mathcal{S}} \bar l_{Hc} \!\sum_{u \in H\cap C^{-1}(c)} \!w(u) \\
&=
\sum_{c\in\colors}\sum_{H\in\mathcal{S}} w_{Hc}\,\bar l_{Hc},
\end{aligned}
\end{equation*}
where the third equality holds because $\sum_{u\in H\cap C^{-1}(c)} w(u)=\sum_{u\in H} w_{uc}=w_{Hc}$.  We conclude that \(\phi\) preserves the objective value.
\hfill\Halmos
\end{proof}

\subsection{Proof of~\Cref{prop:subgraph_in_flow_projection}}
\begin{proof}{Proof}

    Consider a vector $\bar l \in \mathcal{R}_{\textsc{s}}$, and let $(\bar y, \bar f) = \lambda(\bar l)$. 
    For every~$v \in V(D)\setminus\{s\}$ and $c \in \colors$, it follows immediately from the definition of~$\lambda$ that 
\begin{align}\label{eq:subgraph:inarcs}
    \sum_{H \in \mathcal{S}_\supset(v)} \bar l_{Hc} = \sum_{a \in \delta^-(v)} \sum_{\vec H_c \in \mathcal{H}_c(a)} \bar l_{Hc} = \bar y(\delta^-(v),c).
    \end{align}
    Moreover, for any $\vec H_c$ and $v \in V(\vec H_c)\setminus \{s\}$, it holds that 
    \begin{align}\label{eq:subgraph:inoutarcs}
    \sum_{a \in \delta^-(v) \cap A(\vec H_c)} n(\vec H_c,a) \bar l_{Hc} = \bar l_{Hc}  + \sum_{a \in \delta^+(v) \cap A(\vec H_c)}n(\vec H_c, a) \bar l_{Hc}
    \end{align}
     because the indegree of~$v$ in $\vec H_c$ equals~1,  and  $n(\vec H_c,b) = 1 + \sum_{a \in \delta^+(v)\cap A(\vec H_c)} n(\vec H_c, a)$, where  $\{b\}=\delta^-(v) \cap A(\vec H_c)$.

    We next use~\eqref{eq:subgraph:inarcs} and~\eqref{eq:subgraph:inoutarcs} to prove that $(\bar y, \bar f)$ satisfies all constraints in the flow formulation.
    Let $c \in \colors$.
    First observe that $\sum_{a \in \delta^+(s)} \bar y_{ac} = \sum_{a \in \delta^+(s)} \sum_{\vec H_c \in \mathcal{H}_c(a)} \bar l_{Hc} = \sum_{H \in \mathcal{S}} \bar l_{Hc} \leq 1$, where the last equation holds because $H \in \mathcal{S}$ if and only if there is an arc $a \in \delta^+(s)$ such that $\vec H_c \in \mathcal{H}_c(a)$, and  the inequality follows from~\eqref{for:connectSub_2}.
    For each $v \in V(D)\setminus \{s\}$, it holds that 
    \[\sum_{c \in \colors} \sum_{a \in \delta^-(v)} \bar y_{ac} = \sum_{c \in \colors} \sum_{a \in \delta^-(v)} \sum_{\vec H_c \in \mathcal{H}_c(a)} \bar l_{Hc} = \sum_{c \in \colors} \sum_{H \in \mathcal{S}_{\supset}(v)} \bar l_{Hc} \leq 1, \]
    where the second equation and the inequality follow from~\eqref{eq:subgraph:inarcs} and~\eqref{for:connectSub_1}, respectively.

    For each $a \in A(D)$, 
    it follows from the definition of~$\lambda$ that
    $\bar f_{ac} = \sum_{\vec H_c \in \mathcal{H}_c(a)} n(\vec H_c, a) \bar l_{Hc} \leq n\sum_{\vec H_c \in \mathcal{H}_c(a)} \bar l_{Hc} = n\bar y_{ac}$.
    Finally, for every $v \in V(D)\setminus\{s\}$, we have
    \begin{align}
    \!\!\!\sum_{a \in \delta^-(v)}\bar f_{ac} - \sum_{a \in \delta^+(v)}\bar f_{ac} &= \sum_{a \in \delta^-(v)} \sum_{\vec H_c \in \mathcal{H}_c(a)} n(\vec H_c, a) \bar l_{Hc} - \sum_{a \in \delta^+(v)} \sum_{\vec H_c \in \mathcal{H}_c(a)} n(\vec H_c, a) \bar l_{Hc} \nonumber\\
    & = \sum_{H \in \mathcal{S}_\supset(v)} \left ( \sum_{a \in \delta^-(v)\cap A(\vec H_c)}  n(\vec H_c, a) \bar l_{Hc} - \sum_{a \in \delta^+(v)\cap A(\vec H_c)} n(\vec H_c, a) \bar l_{Hc}\right) \label{eq:interchange}\\
    & = \sum_{H \in \mathcal{S}_\supset(v)} \left (  \bar l_{Hc}  + \sum_{a \in \delta^+(v) \cap A(\vec H_c)}n(\vec H_c, a) \bar l_{Hc} - \sum_{a \in \delta^+(v)\cap A(\vec H_c)} n(\vec H_c, a) \bar l_{Hc}\right)\label{eq:reduce:flow:by:one}\\
    & = \sum_{H \in \mathcal{S}_\supset(v)} \bar l_{Hc} = \sum_{a \in \delta^-(v)} \sum_{\vec H_c \in \mathcal{H}_c(a)} \bar l_{Hc} = \bar y(\delta^-(v),c),\label{eq:inarcs:summation}
    \end{align}
    where~\eqref{eq:interchange} holds by interchanging the order of the summation, \eqref{eq:reduce:flow:by:one} follows from~\eqref{eq:subgraph:inoutarcs}, and~\eqref{eq:inarcs:summation} from~\eqref{eq:subgraph:inarcs}. 

    Regarding the objective value of $(\bar y, \bar f)$, equation~\eqref{eq:subgraph:inarcs} implies that
    \begin{align*}
    \sum_{v \in V(D)\setminus\{s\}} \sum_{c \in \colors} w_{vc} \bar y(\delta^-(v),c) & = \sum_{c \in \colors}\sum_{v \in V(D)\setminus\{s\}}  w_{vc} \sum_{H \in \mathcal{S}_\supset(v)} \bar l_{Hc} =
    \sum_{c \in \colors}\sum_{H \in \mathcal{S}} \bar l_{Hc} \sum_{v \in H}  w_{vc} \\ 
    & =\sum_{c \in \colors}\sum_{H \in \mathcal{S}} w_{Hc}\bar l_{Hc}.    
\tag*{\Halmos}
    \end{align*}
\end{proof}

\section{Scale-free instance generation}
\label{Appendix:scale-free generation}

We generate scale-free network instances using the Barabási-Albert model~\citep{barabasi1999emergence}. In this model, the graph is generated using an iterative procedure as follows. 
At each iteration, a new vertex is added and connected to $d$ existing vertices, chosen with a probability proportional to their degree. 
Formally, the probability that the new vertex is linked to~$v$ is
\[
p_v = \frac{\deg(v)}{\sum_u \deg(u)},
\]
where $\deg(v)$ is the degree of~$v$ and the sum is over all existing vertices~$u$.

For the unweighted scale-free instances, the initial coloring is assigned uniformly at random to the vertices. We generate three graphs for each combination of
$n \in \{60,90,120\}$,
$k \in \{\frac{n}{15},\frac{n}{10},\frac{n}{5},\frac{n}{3},\frac{n}{2}\}$,
and $d \in \{2,4,6\}$.  
This gives $135$ unweighted scale-free instances.

For the weighted scale-free instances, the weight of each vertex is set equal to its degree. We generate three graphs for each combination of
$n \in \{90,120,150\}$,
$k \in \{\frac{n}{15},\frac{n}{10},\frac{n}{5},\frac{n}{3},\frac{n}{2}\}$,
and $d \in \{2,4,6\}$,
resulting in $135$ weighted scale-free instances.

The weighted and unweighted instances are derived from the same base graphs for
each number of vertices.  In the scale-free setting, the weighted instances also include larger graphs, as vertex weights tend to make the problem easier to solve.

\end{document}